\definecolor{DarkGreen}{rgb}{0.1,0.5,0.1}
\definecolor{DarkRed}{rgb}{0.5,0.1,0.1}
\definecolor{DarkBlue}{rgb}{0.1,0.1,0.5}
\definecolor{White}{rgb}{1.0,1.0,1.0}
\newcommand{\avg}{\mathop{\textrm{avg}}}
\newcommand{\smlsize}[1]{\fontsize{9}{9}\selectfont #1}
\def\draft{1} 
\def\submit{0} 
    \def\ShowAuthNotes{1}
    \def\ShowAuthNotes{0}
\newcommand{\forsubmit}[1]{#1}
\newcommand{\forreals}[1]{}
\newcommand{\forreals}[1]{#1}
\newcommand{\forsubmit}[1]{}
\newcommand{\authnote}[2]{{ \footnotesize \bf{\color{red}[#1's Note: {\color{blue}#2}]}}}
\newcommand{\authnote}[2]{}
\newtheorem{theorem}{Theorem}[section]
\newtheorem{lemma}{Lemma}[section]
\newtheorem{corollary}[lemma]{Corollary}
\newtheorem{fact}[lemma]{Fact}
\newtheorem{definition}{Definition}[section]
\newcommand{\Psymb}{\mathbb{P}}
\DeclareMathOperator*{\ProbOp}{\Psymb r}
\renewcommand{\Pr}{\ProbOp}
\newcommand{\mper}{\,.}
\newcommand{\mcom}{\,,}
\newcommand{\remove}[1]{}
\newcommand{\eps}{\varepsilon}
\renewcommand{\epsilon}{\varepsilon}
\newcommand{\Laplace}{\mathit{Lap}}
\newcommand{\defeq}{\stackrel{\small \mathrm{def}}{=}}
\newcommand{\err}{\mathrm{maxerr}}
\newcommand{\adderr}{\mathrm{adderr}}
\title{
A Simple and Practical Algorithm \\ for
Differentially Private Data Release
}
\author{
\alignauthor
Moritz Hardt\\
%\affaddr{address}\\
%\affaddr{address}\\
\email{mhardt@us.ibm.com}
\and
\alignauthor
Katrina Ligett\\
%\affaddr{address}\\
%\affaddr{address}
\email{katrina@caltech.edu}
%\thanks{Computer Science Department, Cornell University.
%Work supported in part by an NSF Computing Innovation Fellowship (NSF Award CNF-0937060) and an NSF Mathematical Sciences Postdoctoral Fellowship (NSF Award DMS-1004416).  Email: {\tt katrina@cs.cornell.edu}.}
\and
\alignauthor
Frank McSherry\\
%\affaddr{address}\\
%\affaddr{address}
\email{mcsherry@microsoft.com}
}
\begin{document}
\maketitle

\begin{abstract}
  We present a new algorithm for differentially private data release,
  based on a simple combination of the Exponential Mechanism with the
  Multiplicative Weights update rule.  Our \emph{MWEM} algorithm achieves what are
  the best known and nearly optimal theoretical guarantees, while at
  the same time being simple to implement and experimentally more
  accurate on actual data sets than existing techniques.
%
%Additionally, we detail a highly-scalable implementation of the algorithm,
%capable of processing datasets with orders of magnitudes more attributes than
%previously possible.
%Another benefit of our algorithm is its simplicity.
%Specifically, we study a simple combination of the multiplicative
%weights approach of [Hardt and Rothblum, 2010] with the exponential
%mechanism of [McSherry and Talwar, 2007]. The multiplicative weights
%framework allows us to maintain and improve a distribution
%approximating a given data set with respect to a set of counting
%queries. We use the exponential mechanism to select those queries most
%incorrectly tracked by the current distribution and feed them into the multiplicative weights update. Combining these two techniques, we
%quickly approach a distribution that agrees with the data set on the given set
%of queries up to small error.
\end{abstract}

%\vfill
%\thispagestyle{empty}
%\setcounter{page}{0}
%\pagebreak

\section{Introduction}

Sensitive statistical data on individuals are ubiquitous, and publishable
analysis of such private data is an important objective.
When releasing statistics or synthetic data based on sensitive
data sets, one must balance the inherent tradeoff between the usefulness of
the released information and the privacy of the affected individuals.  Against
this backdrop, differential privacy \cite{DN03,DN04,DMNS06} has emerged as a
compelling privacy definition that allows one to understand this tradeoff via
formal, provable guarantees. In recent years, the theoretical
literature on differential privacy has provided a large repertoire of
techniques for achieving the definition in a variety of settings (see,
e.g.,~\cite{Dwork09,Dwork11}).
However, data analysts have found that several
algorithms for achieving differential privacy add unacceptable levels of noise.

%Applications of the theory of differentially private data analysis to
%real-world data have so far met with mixed success. Eager data analysts have
%on some applications found that 
%%the current state of the art (or most accessible) in 
%several algorithms for achieving differential privacy add unacceptable levels
%of noise.
%While such situations are not evidence that differential privacy is
%an inappropriate definition, they do suggest that we need to develop and
%implement better algorithms. 

%Many of the positive experimental results have come from algorithms specially tailored for (or only tested on) particular classes of queries.

%\knote{Traded a sentence knocking theory for not being practical for the above new sentence.}
%%In many cases, existing theoretical
%results have been focused on demonstrating good asymptotic worst-case
%behavior, but with little regard for constant factors or performance
%on realistic data sets.

In this work we develop a broadly applicable, simple, and
easy-to-implement algorithm, capable of substantially improving performance on
many realistic datasets. The algorithm is a combination of the Multiplicative
Weights approach of \cite{HR10,GuptaHRU11}, maintaining and correcting an approximating
dataset through queries on which the approximate and true datasets differ, and
the Exponential Mechanism \cite{MT07}, which selects the queries most
informative to the Multiplicative Weights algorithm (specifically, those most
incorrect vis-a-vis the current approximation). While in the worst case one
must separately measure all required queries, for less adversarial data
and query sets our approximation can provide very accurate answers to all queries, having privately measured only a relatively small subset of them.
%; the exponential mechanism
%highlights which queries will improve the approximation most, and eventually
%confirms that all remaining queries are accurately reflected by the
%approximate dataset.

\subsection{Our Results}

We present \emph{MWEM}, a new differentially private algorithm producing
synthetic datasets (formally: a fractional weighting of the domain)
respecting any set of \emph{linear queries} (those that apply a function
to each record and sum the results). Our algorithm matches the best
known (and nearly optimal) theoretical accuracy guarantees for releasing
differentially private answers to a set of counting queries (those
mapping records to $\{0,1\})$. 

We present experimental results for producing differentially private
synthetic data for a variety of problems studied in prior work, based
on a variety of real-world data sets. In each case, we empirically evaluate the
accuracy of the differentially private data produced by MWEM using the same query class and accuracy metric proposed by the corresponding prior work.
\begin{enumerate}
\item We consider range queries as studied by~\cite{LiMi11,LiMiAdaptive12},
and find up to three orders of magnitude improvements in accuracy for fixed privacy parameters across several
datasets. 
\item We investigate contingency table release across a collection
of statistical benchmarks as in~\cite{FRY10} and find we are able to improve on prior
work for each. 
\item We consider datacube release 
%for the Adult dataset 
%KL removed the above since I didn't think specifying the dataset added much
as studied by \cite{DingWHL11} and find that our general-purpose
algorithm improves on specialized algorithms designed to optimize various criteria.
\end{enumerate}
Releasing synthetic data guarantees important properties, including consistency of statistics and compatibility with downstream analyses expecting actual datasets as input. 

Finally, we describe a scalable implementation of MWEM capable
of processing and releasing datasets of substantial complexity. 
Producing synthetic data for the classes of queries we consider is known to be computationally hard in the \emph{worst-case}~\cite{DworkNRRV09,UllmanV11}.
Indeed, almost all prior work performs computation proportional to the size of the
data domain, which limits them to datasets with relatively few attributes. 
%(\cite{DingWHL11} describe going beyond $20$ attributes as ``impossible'').
In contrast, we are able to process datasets with
thousands of attributes, corresponding to domains of size $2^{1000}$. Our implementation integrates a scalable parallel
implementation of Multiplicative Weights, and a representation of the
approximating dataset in a factored form that only exhibits complexity when the
model requires it.

\pagebreak
\section{Our Approach}

MWEM maintains an approximating dataset as a (scaled) distribution
over the domain $D$ of data records. We repeatedly improve the
accuracy of this approximation with respect to the private dataset and
the desired query set by selecting and posing a query poorly served by
our approximation and improving the approximation to better reflect
the answer to this query. We select and pose queries using the
Exponential~\cite{MT07} and Laplace Mechanisms~\cite{DMNS06},
whose definitions and privacy properties we review in
Subsection~\ref{subsec:dp}. We improve our approximation using the
Multiplicative Weights update rule~\cite{HR10}, reviewed in
Subsection~\ref{subsec:mw}. We describe their integration in
Subsection~\ref{subsec:together}, giving a full description of our
algorithm and its formal properties.

\subsection{Differential Privacy and Mechanisms}\label{subsec:dp}

Differential privacy is a constraint on a randomized computation that the
computation should not reveal specifics of individual records present in the
input. It places this constraint by requiring the mechanism to behave almost
identically on any two datasets that are sufficiently close.

Imagine a dataset $A$ whose records are drawn from some abstract domain
$D$, and which is described as a function from $D$ to the natural
numbers $\mathbb{N}$, with $A(x)$ indicating the frequency (number of
occurrences) of $x$ in the dataset. We use $\|A - B\|$ to indicate the
sum of the absolute values of difference in frequencies (how many
records would have to be added and removed to change $A$ into another
dataset $B$).

\begin{definition}[Differential Privacy]
A mechanism $M$ mapping datasets to distributions over an output space $R$ 
provides $(\epsilon,
\delta)$-differential privacy if for every
$S\subseteq R$ and
for all data sets $A, B$ where $\|A - B\| \le 1$, 
\begin{equation*}%\label{eq:dp}
\Pr[M(A) \in S] \leq e^{\epsilon} \Pr[M(B) \in S] + \delta \; .
\end{equation*}
If $\delta=0$ we say that $M$ provides
$\epsilon$-differential privacy.
\end{definition}

The Exponential Mechanism~\cite{MT07} is an $\epsilon$-differentially
private mechanism that can be used to select 
among the best of a discrete set of alternatives, where ``best'' is
defined by a function relating each alternative to the underlying
secret data. Formally, for a set of alternative results $R$, we
require a quality scoring function $s : dataset \times R \rightarrow
\mathbb{R}$, where $s(B, r)$ is interpreted as the quality of the
result $r$ for the dataset $B$. To guarantee $\epsilon$-differential
privacy, the quality function is required to satisfy a 
stability property: that for each result r the difference $|s(A,r) -
s(B,r)|$ is at most $\|A-B\|$. With this quality function in hand, the
Exponential Mechanism $E$ simply selects a result $r$ from the
distribution satisfying
$$ \Pr[E(B) = r] \propto \exp(\epsilon \times  s(B,r) / 2). $$
Intuitively, the mechanism selects result $r$ biased
exponentially by its quality score. The Exponential Mechanism takes time linear
in the number of possible results, as it needs to evaluate $s(B,r)$ once for each $r$.

The Laplace Mechanism is an $\epsilon$-differentially private mechanism which reports
approximate sums of bounded functions across a dataset. If $f$ is a
function from records to the interval $[-1,+1]$, the Laplace Mechanism
$L$ obeys
	$$\Pr[L(B) = r] \propto \exp\big{(}-\epsilon \times |r - \sum_{x
  \in D} f(x) \times B(x)|\big{)} \; . $$ 
        Although the Laplace Mechanism is an instance of the
        Exponential Mechanism, it can be implemented much more
        efficiently, by adding Laplace noise with parameter
        $1/\epsilon$ to the sum $\sum_{x \in D} f(x) \times B(x)$. As
        the Laplace distribution is exponentially concentrated, the
        Laplace Mechanism provides an excellent approximation to the true sum.

\subsection{Multiplicative Weights Update Rule}\label{subsec:mw}

The Multiplicative Weights approach has seen application in many areas
of computer science. Here we will use it as proposed in Hardt and
Rothblum~\cite{HR10}, to repeatedly improve an approximate distribution to
better reflect some true distribution. The intuition behind
Multiplicative Weights is that should we find a query whose answer on
the true data is much larger than its answer or the approximate data, we
should scale up the approximating Weights on records contributing positively and
scale down the Weights on records contributing negatively. If the true
answer is much less than the approximate answer, we should do the opposite.

More formally, let $q$ be a function mapping records to the interval $[-1,+1]$, and extended to a function of datasets by accumulating the sum of $q$ applied to the individual records. If $A$ and $B$ are distributions over the domain $D$ of records,  where $A$ is a synthetic distribution intended to approximate a true distribution $B$ with respect to query $q$, then the Multiplicative Weights update rule recommends updating the weight $A$ places on each record $x$ by:
	$$ A_{new}(x) \propto A(x) \times \exp(q(x) \times (q(B) - q(A)) / 2) \; . $$
The proportionality sign indicates that 
%the scaling may result in a
%weighting that is no longer a distribution, and so 
the approximation
should be renormalized after scaling. Hardt and Rothblum show that each time this rule
is applied, the relative entropy between
%the distributions 
$A$ and $B$
decreases by an additive $(q(A) - q(B))^2$. As long as we can continue
to find queries on which the two disagree, we can continue to improve
the approximation. 

Although our algorithm will manipulate datasets, we can divide their frequencies by the numbers of records $n$ whenever we need to apply Multiplicative Weights updates, and renormalize to a fixed number of records (rather than to one, as we would for a distribution).

\subsection{Putting Things Together}\label{subsec:together}

Our combined approach is simple: we repeatedly use the Exponential
Mechanism to find queries $q$ that are poorly served by our current
approximation of the underlying private data, we measure these queries
using the Laplace Mechanism, and we then use this measurement to
improve our distribution using the Multiplicative Weights update rule.

We assume there exists a set
$Q$ of queries $q$, each functions from the record domain $D$ to the
interval $[-1,+1]$ and extended to datasets $A$ by $q(A) = \sum_{x \in
  D} q(x) \times A(x)$. The MWEM algorithm appears in Figure
\ref{fig:algorithm}.
\begin{figure}[tbp]
\begin{minipage}{7.75cm}
\hrule
\phantom{.}
\noindent{\bf Inputs:} Data set~$B$ over a universe~$D,$\\
\noindent\phantom{\bf Inputs:} Number of iterations $T\in\mathbb{N},$\\
\noindent\phantom{\bf Inputs:} Privacy parameter $\epsilon>0.$\\

\noindent Let $n$ denote $\|B\|$, the number of records in $B$.

\noindent Let $A_0$ denote $n$ times the uniform distribution over~$D$.

\noindent For iteration $i=1,...,T$:
\begin{enumerate}
\item {\it Exponential Mechanism:}
Sample a query $q_i\in Q$ using the Exponential Mechanism parametrized with epsilon value $\epsilon / 2T$ and the score function 
$$s_i(B, q) = |q(A_{i-1}) - q(B)| \; .$$
%\knote{I've tried to clarify the difference between queries and quality scores by making the scores $s$es and by being more explicit above\ldots not sure it helped}
%

\item {\it Laplace Mechanism:} Let measurement $m_i = q_i(B) + \Laplace(2T/\epsilon).$

\item {\it Multiplicative Weights:}
Let $A_i$ be $n$ times the distribution whose entries satisfy
$$ A_i(x) \propto A_{i-1}(x) \times \exp(q_i(x) \times (m_i - q_i(A_{i-1})) / 2n) \; . $$
\end{enumerate}
\noindent {\bf Output:} $A = \avg_{i < T} A_i.$
\end{minipage}
\phantom{.}
\hrule
\phantom{.}
\caption{The MWEM algorithm.\label{fig:algorithm}}
\end{figure}
To clarify how MWEM works, and to highlight its simplicity, we present a full implementation in the Appendix, in Figure \ref{fig:code}.

\subsubsection{Formal Guarantees}
As indicated in the
introduction, the formal guarantees of MWEM represent the best known theoretical results on
differentially private synthetic data release.

We first describe the privacy properties of our algorithm.
\begin{theorem}\label{thm:privacy} MWEM satisfies $\epsilon$-differential privacy.
\end{theorem}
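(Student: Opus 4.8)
The plan is to decompose MWEM into a sequence of standard private primitives and then invoke the composition and post-processing properties of differential privacy. The only steps that access the sensitive dataset $B$ are, in each of the $T$ iterations, the Exponential Mechanism call in Step~1 and the Laplace Mechanism call in Step~2; the Multiplicative Weights update in Step~3 and the final averaging use only $m_i$, $A_{i-1}$, and the previously selected queries, all of which are functions of outputs already produced, so by post-processing they consume no additional privacy budget.

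First I would check that each of the two data-dependent steps is $(\epsilon/2T)$-differentially private. For Step~1, the score function $s_i(B,q) = |q(A_{i-1}) - q(B)|$ must satisfy the stability property required by the Exponential Mechanism, namely $|s_i(A,q) - s_i(B,q)| \le \|A - B\|$ for every $q \in Q$. Since $A_{i-1}$ does not depend on the dataset argument being varied, and $q$ maps records into $[-1,+1]$, the reverse triangle inequality gives $|s_i(A,q) - s_i(B,q)| \le |q(A) - q(B)| \le \|A-B\|$, so instantiating the Exponential Mechanism with parameter $\epsilon/2T$ yields $(\epsilon/2T)$-differential privacy. For Step~2, $q_i$ likewise has sensitivity at most $1$ as a function of the dataset, so releasing $q_i(B) + \Laplace(2T/\epsilon)$ is $(\epsilon/2T)$-differentially private by the guarantee of the Laplace Mechanism recalled in Subsection~\ref{subsec:dp}.

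Next I would assemble the iterations via adaptive sequential composition. The transcript $(q_1, m_1, q_2, m_2, \ldots, q_T, m_T)$ is generated by $2T$ mechanisms applied in sequence, where the $i$-th iteration sees $A_{i-1}$, a deterministic function of $q_1,\ldots,q_{i-1}$ and $m_1,\ldots,m_{i-1}$, but each individual mechanism is $(\epsilon/2T)$-differentially private in $B$. Basic (adaptive) composition then shows the whole transcript is $2T \cdot (\epsilon/2T) = \epsilon$-differentially private, and since the released object $A = \avg_{i<T} A_i$ is a function of this transcript alone, post-processing preserves $\epsilon$-differential privacy.

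I expect the only real subtlety — rather than an obstacle — to be making precise that $A_{i-1}$ may be treated as data-independent when verifying the sensitivity bounds in Steps~1 and~2: this is exactly what licenses the appeal to adaptive composition, since $A_{i-1}$ enters only through quantities already accounted for in the privacy budget. Everything else is a routine application of composition and post-processing.
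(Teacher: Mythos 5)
Your proof is correct and follows the same route as the paper: the paper's proof is a one-liner invoking additive composition over the $T$ Exponential Mechanism calls and $T$ Laplace Mechanism calls, each with parameter $\epsilon/2T$. Your additional verification of the score-function stability, the sensitivity of $q_i(B)$, and the post-processing/adaptive-composition bookkeeping simply makes explicit what the paper leaves implicit.
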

\begin{proof}
  The composition rules for differential privacy state that $\epsilon$
  values accumulate additively. We make $T$ calls to the Exponential
  Mechanism with parameter $(\epsilon/2T)$ and $T$ calls to the
  Laplace Mechanism with parameter $(\epsilon/2T)$, resulting in
  $\epsilon$-differential privacy.
\end{proof}

We now bound the worst-case performance of the algorithm, in terms of
the maximum error between $A$ and $B$ across all $q \in Q$. The
natural range for $q(A)$ is $[-n,+n]$, and we see that by increasing
$T$ beyond $4\log|D|$ we can bring the error asymptotically smaller
than $n$.
\begin{theorem}\label{thm:utility} 
\label{UTILITY}
For any dataset $B$, set of linear queries $Q$, $T \in \mathbb{N}$, and $\epsilon > 0$, with probability at least $1-2T/|Q|$, MWEM produces $A$ such that
\begin{equation*}\label{eq:utility}
\max_{q \in Q} |q(A) - q(B)|  \le  2n\sqrt{\frac{\log|D|}T}
+\frac{10T\log|Q|}{\epsilon } \; .
\end{equation*}
\end{theorem}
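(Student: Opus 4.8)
The plan is to combine a potential-function argument for Multiplicative Weights with tail bounds for the Exponential and Laplace mechanisms, following the template of Hardt--Rothblum~\cite{HR10}. Work with the normalized distributions $\bar{A}_i = A_i/n$ and $\bar{B} = B/n$, and track the potential $\Phi_i = \KL(\bar B \,\|\, \bar A_i)$, the relative entropy. Since $\bar A_0$ is uniform, $\Phi_0 \le \log|D|$, and $\Phi_i \ge 0$ always, so the total decrease of $\Phi$ over all $T$ rounds is at most $\log|D|$.

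First I would show that each Multiplicative Weights step drives down the potential by an amount controlled by how badly the chosen query was served. The clean version of the Hardt--Rothblum lemma states that if we update using the \emph{true} answer $q_i(\bar B)$, then $\Phi_{i-1}-\Phi_i \ge (q_i(\bar A_{i-1})-q_i(\bar B))^2 - (\text{something})$; here we update using the noisy measurement $m_i$ instead, so I would carry the noise $\eta_i = m_i - q_i(B)$ through the calculation and obtain something like
\begin{equation*}
\Phi_{i-1} - \Phi_i \;\ge\; \frac{(q_i(\bar A_{i-1}) - q_i(\bar B))^2}{2} \;-\; 2\Big(\frac{|\eta_i|}{n}\Big)^2 \;.
\end{equation*}
Summing over $i=1,\dots,T$ and using the telescoping bound $\sum_i(\Phi_{i-1}-\Phi_i)\le\log|D|$ yields
$\sum_i (q_i(\bar A_{i-1}) - q_i(\bar B))^2 \le 2\log|D| + 4\sum_i(|\eta_i|/n)^2$.

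Next I would control the two error sources. For the Laplace step, $\eta_i\sim\Laplace(2T/\epsilon)$, and a union bound over the $T$ draws gives $|\eta_i|\le O((T/\epsilon)\log|Q|)$ for all $i$ except with probability $\le T/|Q|$ (taking the failure probability budget as stated). For the Exponential Mechanism step, its standard guarantee says the selected $q_i$ has score within an additive $O((T/\epsilon)\log|Q|)$ of the best score $\max_{q}|q(A_{i-1})-q(B)|$, again except with probability $\le T/|Q|$ per round, so $\le T/|Q|$ total by a union bound; combining the two failure events gives the claimed $1-2T/|Q|$. Hence in the good event, for every round $i$,
\begin{equation*}
\max_{q\in Q}|q(A_{i-1})-q(B)| \;\le\; |q_i(A_{i-1})-q_i(B)| + O\!\Big(\frac{T\log|Q|}{\epsilon}\Big)\;.
\end{equation*}
Now average the inequality $\sum_i(q_i(\bar A_{i-1})-q_i(\bar B))^2 \le 2\log|D| + O(T(\log|Q|/(n\epsilon))^2)$ over the $T$ rounds: by convexity (or just the fact that a min is at most an average), there is some round, and in fact the time-average, for which $|q_i(\bar A_{i-1})-q_i(\bar B)|$ is at most $\sqrt{2\log|D|/T} + O(\sqrt{T}\log|Q|/(n\epsilon))$. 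Scaling back up by $n$, pushing this through the Exponential-Mechanism slack above, and using that $A = \avg_i A_i$ together with linearity of each $q$ (so $q(A) = \avg_i q(A_i)$) turns the per-round bound into a bound on $\max_q|q(A)-q(B)|$; collecting the constants gives $2n\sqrt{\log|D|/T} + 10T\log|Q|/\epsilon$.

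The main obstacle is the first step: getting the MW potential-drop inequality to survive the substitution of the noisy measurement $m_i$ for the true value $q_i(B)$, and making sure the residual noise terms are genuinely lower-order. One must be careful that the update in Figure~\ref{fig:algorithm} uses $m_i - q_i(A_{i-1})$ divided by $2n$ (because $q$ ranges in $[-n,n]$ on datasets but $[-1,1]$ on records), so the quadratic-improvement lemma applies to the \emph{normalized} quantities and the noise contributes a $(|\eta_i|/n)^2$ term rather than $|\eta_i|^2$ — this is exactly what keeps the Laplace error at the additive scale $T\log|Q|/\epsilon$ rather than something worse. The rest — the mechanism tail bounds, the union bound, and the averaging/convexity argument — is routine.
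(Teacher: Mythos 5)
Your proposal is correct and follows essentially the same route as the paper's proof: the relative-entropy potential with $\Psi_0 \le \log|D|$ and $\Psi_i \ge 0$, a per-round improvement lemma that survives the noisy measurement $m_i$, tail bounds plus a union bound for the Exponential and Laplace mechanisms giving the $1-2T/|Q|$ event, and an averaging/Cauchy--Schwarz step using $q(A)=\avg_i q(A_i)$. The one correction is the constant in your sketched potential drop: what is actually provable (and what the paper uses) is $\Psi_{i-1}-\Psi_i \ge \left(\frac{q_i(A_{i-1})-q_i(B)}{2n}\right)^2 - \left(\frac{m_i-q_i(B)}{2n}\right)^2$, not a version with a $\tfrac12$-coefficient gain (a two-point example with no noise already violates that), and it is precisely this weaker form that produces the $2n\sqrt{\log|D|/T}$ term when you telescope and collect constants.
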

\begin{proof}
  The proof of this theorem is an integration of pre-existing analyses
  of both the Exponential Mechanism and the Multiplicative Weights
  update rule. We include it for completeness in the Appendix.
\end{proof}

By optimizing the choice of $T$ to minimize the above bound we have
the following corollary, stated asymptotically only because the
constants are messy rather than large.
\begin{corollary}\label{thm:main1}
For any $B$, $Q$, and $\epsilon > 0$, there exists a $T$ such that with probability at least $1-1/poly(|Q|)$, the $A$ returned by MWEM obeys
\[
\max_{q \in Q} |q(A) - q(B)| \textrm{ is }  O\left(n^{2/3}\left(\frac{\log|D|\log|Q|}{\epsilon}\right)^{1/3}\right)\mper
\]
\end{corollary}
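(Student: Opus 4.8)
The plan is to derive the corollary directly from Theorem~\ref{thm:utility} by viewing the right-hand side of its error bound as a function of the free parameter $T$ and optimizing over it. Write
$g(T) = 2n\sqrt{\log|D|/T} + 10T\log|Q|/\epsilon$; the first term is decreasing in $T$ and the second is increasing, so up to constant factors the minimum is attained where the two terms are equal. Setting $2n\sqrt{\log|D|/T} = 10T\log|Q|/\epsilon$ and solving for $T$ gives $T^\star = \bigl(n\epsilon\sqrt{\log|D|}/(5\log|Q|)\bigr)^{2/3}$.

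Since MWEM takes an integer number of iterations, I would instantiate it with $T = \max\{1,\lceil T^\star\rceil\}$. Rounding up perturbs each of the two terms of $g$ by at most a constant factor provided $T^\star \ge 1$; in the remaining case $T^\star < 1$ a one-line computation shows $n$ is already so small (roughly $n = O(\log|Q|/(\epsilon\sqrt{\log|D|}))$) that even the crude bound $g(1) = O(n\sqrt{\log|D|} + \log|Q|/\epsilon)$ is itself $O\!\bigl(n^{2/3}(\log|D|\log|Q|/\epsilon)^{1/3}\bigr)$. Substituting $T = \Theta(T^\star)$ back into $g$, both terms collapse to $\Theta\!\bigl(n^{2/3}(\log|D|\log|Q|/\epsilon)^{1/3}\bigr)$ — these are exactly the ``messy but not large'' constants the statement refers to — which yields the claimed error bound.

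For the confidence level, Theorem~\ref{thm:utility} says the bound fails with probability at most $2T/|Q|$, and with $T = O\!\bigl(n^{2/3}\,\polylog(|D|,|Q|,1/\epsilon)\bigr)$ this is $1/\poly(|Q|)$ in the intended regime (concretely whenever $T \le |Q|^{1-\Omega(1)}$, e.g. when $n$ is polynomially bounded in $|Q|$). I expect the only real difficulty to be bookkeeping rather than any new idea: carrying the constants cleanly through the balancing step, and disposing of the two boundary cases for $T$ (when $T^\star < 1$, and when the unconstrained optimum would be too large for $2T/|Q|$ to be $1/\poly(|Q|)$). This is precisely why the corollary is stated asymptotically and with the existential ``there exists a $T$'' rather than with an explicit formula.
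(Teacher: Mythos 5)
Your route is the same as the paper's: Corollary~\ref{thm:main1} is obtained purely by optimizing $T$ in the bound of Theorem~\ref{thm:utility}, and your balancing computation is correct --- equating $2n\sqrt{\log|D|/T}$ with $10T\log|Q|/\epsilon$ gives $T^\star=\bigl(n\epsilon\sqrt{\log|D|}/(5\log|Q|)\bigr)^{2/3}$, and plugging $T=\Theta(T^\star)$ back in makes both terms $\Theta\bigl(n^{2/3}(\log|D|\log|Q|/\epsilon)^{1/3}\bigr)$, as does your observation that rounding $T^\star$ up to an integer costs only constant factors when $T^\star\ge 1$. Your caveat about the success probability is also fair: $2T/|Q|$ is only $1/\poly(|Q|)$ when $T$ is polynomially bounded in $|Q|$, a point the paper itself glosses over (though the end of its proof of Theorem~\ref{thm:utility} notes that replacing the constant $5$ by $3c+2$ drives the failure probability down to $2T/|Q|^c$).

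The one genuine flaw is your disposal of the boundary case $T^\star<1$. There you know $n\sqrt{\log|D|}\le 5\log|Q|/\epsilon$, so indeed $g(1)=O(\log|Q|/\epsilon)$; but the containment $\log|Q|/\epsilon=O\bigl(n^{2/3}(\log|D|\log|Q|/\epsilon)^{1/3}\bigr)$ is equivalent (up to constants) to $\log|Q|/\epsilon=O(n\sqrt{\log|D|})$, which is exactly the \emph{reverse} of the inequality defining this regime. Concretely, with $n=1$, $\log|D|=1$, and $\log|Q|/\epsilon$ arbitrarily large, your crude bound $g(1)\approx 10\log|Q|/\epsilon$ vastly exceeds the target $(\log|Q|/\epsilon)^{1/3}$, so that step fails. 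The fix is to bypass Theorem~\ref{thm:utility} entirely in this regime: since every query maps records to $[-1,+1]$ and $A$ and $B$ each have total mass $n$, one has $\max_{q\in Q}|q(A)-q(B)|\le 2n$ deterministically, and $T^\star<1$ forces $n\le 5\log|Q|/(\epsilon\sqrt{\log|D|})\le 5\log|D|\log|Q|/\epsilon$ (assuming $\log|D|\ge 1$), whence $2n=O\bigl(n^{2/3}(\log|D|\log|Q|/\epsilon)^{1/3}\bigr)$ with probability one. With that substitution your argument is complete and coincides with the paper's intended (unwritten) proof.
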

\noindent

If we permit $(\epsilon,\delta)$-differential privacy with $\delta > 0$, we can use {\em k-fold adaptive composition}~\cite{DRV10} to conclude that MWEM also provides $(\epsilon\sqrt{2\log(1/\delta)/T}+\epsilon(e^{\epsilon/T}-1),\delta)$-differential privacy, resulting in a different optimal $T$.

\begin{corollary}\label{thm:main2}
For any $B$, $Q$, and $\epsilon' > 0, \delta > 0$, there exist $\epsilon > 0$ and $T$ so that MWEM provides $(\epsilon',\delta)$-differential privacy and with probability at least $1-1/poly(|Q|)$ produces $A$ such that
\[
\max_{q \in Q} |q(A) - q(B)|  \textrm{ is } 
O\left( n^{1/2}\left(
\frac{\log^{1/2} |D| \log |Q| \log(1/\delta)}{\epsilon' }\right)^{1/2}\right)\mper
\]
\end{corollary}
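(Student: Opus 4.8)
The plan is to derive Corollary~\ref{thm:main2} from Theorem~\ref{thm:utility} by changing only how the $2T$ private measurements inside MWEM are composed, and then re-optimizing the number of rounds $T$. Run MWEM with an internal privacy parameter $\epsilon$, so that each of its $2T$ primitive invocations (the $T$ Exponential Mechanism and $T$ Laplace Mechanism steps) is $(\epsilon/2T)$-differentially private. Under basic composition this run is $\epsilon$-differentially private, and Theorem~\ref{thm:utility} bounds its error by $2n\sqrt{\log|D|/T}+10T\log|Q|/\epsilon$. Under the $k$-fold adaptive composition theorem of~\cite{DRV10}, as the text already notes, the \emph{same} run is instead $(\epsilon',\delta)$-differentially private with $\epsilon' = \epsilon\sqrt{2\log(1/\delta)/T}+\epsilon(e^{\epsilon/T}-1)$. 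So I would regard $\epsilon'$ and $\delta$ as the targets and invert this relation to solve for the internal $\epsilon$.

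First I would choose $\epsilon$ so that the leading composition term equals, say, $\epsilon'/2$, i.e.\ $\epsilon=\Theta\!\big(\epsilon'\sqrt{T/\log(1/\delta)}\big)$, and then verify \emph{a posteriori} that the per-call budget $\epsilon/2T$ is small for this choice, so that the correction $\epsilon(e^{\epsilon/T}-1)=\Theta(\epsilon^2/T)=\Theta((\epsilon')^2/\log(1/\delta))$ is itself at most $\epsilon'/2$; this is the only place one uses that $\epsilon'$ is not too large compared with $\log(1/\delta)$, the regime of interest. Substituting this $\epsilon$ into the bound of Theorem~\ref{thm:utility} leaves the first error term unchanged and replaces the second by $\Theta\!\big(\sqrt{T}\,\log|Q|\sqrt{\log(1/\delta)}/\epsilon'\big)$, yielding a bound of the form
\[
  \max_{q\in Q}|q(A)-q(B)| \;=\; O\!\left( n\sqrt{\frac{\log|D|}{T}} \;+\; \frac{\sqrt{T}\,\log|Q|\sqrt{\log(1/\delta)}}{\epsilon'} \right)
\]
that holds with the probability guaranteed by Theorem~\ref{thm:utility}.

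It then remains to pick the integer $T$ minimizing the right-hand side. The first term decreases and the second increases in $T$, so balancing them gives $T=\Theta\!\big(\epsilon'\,n\sqrt{\log|D|}\,/(\log|Q|\sqrt{\log(1/\delta)})\big)$, rounded to an integer. Plugging this $T$ back in makes both terms equal to $\Theta\!\big(n^{1/2}\,(\log^{1/2}|D|\,\log|Q|\,\log^{1/2}(1/\delta)/\epsilon')^{1/2}\big)$, which, since $\log^{1/2}(1/\delta)\le\log(1/\delta)$, is of the claimed form. The success probability $1-1/\poly(|Q|)$ is inherited from Theorem~\ref{thm:utility}: its appendix proof can be run with failure probability $1/\poly(|Q|)$ at the cost of only a $\log|Q|$ additive term in the error, which the stated bound already absorbs.

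I expect the main obstacle to be bookkeeping rather than any new idea. Because the internal budget $\epsilon$ is now itself a function of $T$ (through advanced composition), the optimization runs over a single coupled parameter rather than two free ones, and one must carry the regime conditions --- the $(e^{\epsilon/T}-1)$ correction staying subdominant, and $T$ integral and $O(|Q|)$ --- through the calculation. Once those are pinned down, the argument is exactly the two-term balancing used for Corollary~\ref{thm:main1}, with the composition theorem of~\cite{DRV10} substituted for basic composition; no new probabilistic or algorithmic ingredient is needed.
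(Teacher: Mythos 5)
Your proposal is correct and follows exactly the route the paper intends (the paper gives no explicit proof of this corollary, only the remark that $k$-fold adaptive composition recharacterizes the same run as $(\epsilon\sqrt{2\log(1/\delta)/T}+\epsilon(e^{\epsilon/T}-1),\delta)$-private, "resulting in a different optimal $T$"): invert the composition relation for the internal $\epsilon$, substitute into Theorem~\ref{thm:utility}, and rebalance $T$. Your bookkeeping in fact yields the slightly stronger $\log^{1/2}(1/\delta)$ dependence inside the parentheses, which you correctly observe is absorbed by the stated bound.
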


Corollary~\ref{thm:main2} is tight in its dependence on $n$ and $|Q|.$
Indeed, there is a lower bound of $\Omega(\sqrt{n\log|Q|})$ on the error of
any algorithm avoiding what is called ``blatant non-privacy"~\cite{DN03,DN04,DY08}. As
$(\epsilon,\delta)$-differential privacy avoids blatant non-privacy for
sufficiently small but constant $\epsilon,\delta,$ the bound applies. We
remark that the lower bound coincides with what is known as the
\emph{statistical sampling error} that arises already in the absence of any
privacy concerns.

Note that these bounds are worst-case bounds, over adversarially
chosen data and query sets. We will see in Section \ref{sec:experimentation}
that MWEM works very well in more realistic settings.

\subsubsection{Running time}
The running time of our basic algorithm as described in
Figure~\ref{fig:algorithm} is $O(n |Q| + T |D| |Q|)).$
The algorithm is embarrassingly parallel: query evaluation can be conducted independently, implemented using modern database technology; the only required serialization is that the $T$ steps must proceed in sequence, but within each step essentially all work is parallelizable.

Results of
Dwork et al.~\cite{DNRRV09} show that for worst case data, producing differentially private
synthetic data for a set of counting queries requires time $|D|^{0.99}$ under
reasonable cryptographic hardness assumptions. Moreover, Ullman and
Vadhan~\cite{UllmanV11} showed that similar lower bounds also hold for more
basic query classes such as we consider in Section~\ref{sec:contingency}. Despite these hardness results, 
we provide an alternate implementation of our
algorithm in Section~\ref{sec:implementation} and demonstrate that its running time is acceptable on real-world data even in cases where $|D|$
is as large as $2^{77}$, and on simple synthetic input datasets where $|D|$ is as large as $2^{1000}$.

\subsubsection{Improvements and Variations}\label{subsec:variations}

We now highlight several variations on the simple approach presented
above that can lead to noticeably improved performance in practice,
although these optimizations do not improve the theoretical worst case
bounds. We incorporate these variations in our experimental validation below.

To achieve the formal bounds, MWEM returns the average over the $A_i$ considered in each round. However, the proof notes that the relative entropy between the distributions $A_i/n$ and $B/n$ decreases with every iteration in which $s_i(B,q_i)$ is large. Rather than return the average $\avg_{i < T} A_i$ we can return $A_T$. We do this in all of our experiments.

In each iteration our algorithm selects a query to measure based on
the amount of error exhibited between our approximating distribution
and the true data. The selected query is measured, and
corrected. However, there is no harm in applying the Multiplicative
Weights update rule multiple times, using all previously taken
measurements. As long as any previously measured $(q_j, m_j)$ pair
is poorly reflected by $A_i$, determined from $|q_j(A_i) - m_j|$ without
reinterrogating $B$, we can improve $A_i$ by reapplying the
Multiplicative Weights step. We do this in all of our experiments. 
%However, over the course of the algorithm, measurements may drift
%again. There is no privacy cost to re-processing a previous
%measurement, and doing so brings the two distributions closer without
%re-interrogating the data.
%\knote{Mila thought the above was confusing. still need to fix}

Absent priors over the underlying private distribution, our best guess
at the outset is of a uniform distribution, and we initialize our
candidate output distribution accordingly. The quality of this
approximation can sometimes be substantially improved by taking a
histogram over the domain $D$; by simply counting (with noise) the
number of occurrences of each type of record, we can identify values
that occur with substantial frequency and initialize $A_0$
accordingly. This works well if there are several values with high
frequency, but it does consume from the privacy budget, reducing the
accuracy allowed in the query measurement stage. We consider this optimization in Section \ref{sec:contingency}.

For a fixed privacy budget $\epsilon$, the number of iterations $T$ to conduct is the remaining important parameter. Setting it too low results in not enough information extracted about the data, but setting it too high causes each iteration to give very noisy measurements, of little value. Instead, we can set the number adaptively, by starting with a very small $\epsilon$ and asking queries until the observed signal drops below noise levels (or until our budget is expended). If privacy
budget still remains, we double $\epsilon$ and restart. As $\epsilon$ increases we
will only take more measurements, each iteration asking at least as many questions as the last at twice the privacy cost, causing the cumulative cost to telescope and
be within a factor of two of the final cost. In the final run, a value of $T$ is used that is within a factor of two of optimal. We do not apply this parameter selection in our experiments, instead setting the value of $T$ manually.

%The privacy analyses corresponding to these variants are
%straightforward and are omitted from this manuscript.

\subsection{Related Work}
%\Mnote{With this section up front, it uses a lot of terminology not yet
%defined. I tried to reduce jargon, but feel like it is still hard to read.}

%\paragraph*{Matching the state of the art in theory} 
The study of differentially private synthetic data release mechanisms for
arbitrary counting queries began with the work of Blum, Ligett, and
Roth~\cite{BLR08}, who gave a computationally inefficient (superpolynomial in
$|D|$) $\eps$-differentially private algorithm that achieves error that scales only logarithmically with the
number of queries. The dependence on $n$ and $|Q|$ achieved by their algorithm
is $O(n^{2/3}\log^{1/3}|Q|)$ (which is the same dependence we
achieve in
Corollary~\ref{thm:main1}).  Since~\cite{BLR08}, subsequent
work~\cite{DNRRV09,DRV10,RR10,HR10} has focused on computationally more
efficient algorithms (i.e., polynomial in $|D|$) as well as algorithms that
work in the interactive query setting.\footnote{In this setting, the algorithm receives
queries one at a time and answers each before receiving the next.
Subsequent queries may be chosen adaptively based on previous answers.} The
latest of these results is the private Multiplicative Weights method of Hardt and 
Rothblum~\cite{HR10} which achieves error rates of
$O(\sqrt{n\log(|Q|)})$ for $(\eps, \delta)$-differential privacy (which is the same
dependence we achieve, in Corollary~\ref{thm:main2}). While their algorithm works in the
interactive setting, it can also be used non-interactively to produce
synthetic data, albeit at a computational overhead of $O(n).$
%The current work builds on~\cite{HR10} in providing improved error bounds and
%runtime for the task of creating synthetic data.  
MWEM can also be
cast as an instance of a more general Multiplicative-Weights based framework of 
Gupta et al.~\cite{GuptaHRU11}, though our specific instantiation and its practical
appeal were not anticipated in their work.
%who showed that agnostic
%learning algorithms can be used as a subroutine in the multiplicative weights
%framework to select queries with near maximal error. The use of an agnostic
%learner in their work is analogous to the use of the exponential mechanism in
%our algorithm. Our algorithm and its analysis do not strictly fall into their 
%framework due to the use of a randomized update rule in our algorithm and
%various optimizations for real-world data.

%\paragraph*{Beating the state of the art in practice} 
Barak et al.~\cite{BCD+07} were the first to address the problem of
generating synthetic databases that preserve differential privacy. Their
algorithm maintains utility with respect to a set of contingency
tables (see Section~\ref{sec:contingency} for definitions).
%(as opposed to general counting queries)
% essentially computes the desired noisy
%cells and then solves the linear program constrained by these noisy
%cells in order to obtain consistent data.  
Their approach identifies the complete set of measurements required to reproduce a contingency
table, and takes each one with a uniform level of accuracy. This may make a large
number of redundant or uninformative measurements at the expense of accuracy
in the more interesting queries. Fienberg et al.~\cite{FRY10} observe that, on
realistic data sets, the Barak et al.~algorithm must add so much noise to
preserve differential privacy that the resulting data are no longer
useful. In Section~\ref{sec:contingency}, we improve upon these
results.

Li et al.~\cite{LHR+10} investigated another approach to answering sets of
counting queries. MWEM, when applied to the specific query sets for
which they state results, reduces the formal error bounds substantially. As an example, in the case where $Q$ corresponds to the set of all
$(0,1)$-counting queries, so that $|Q| = 2^{|D|}$, the dependence on $|D|$ goes from $O(|D| \log^2
|D|)$ to $O(|D|^{1/3} \log^{1/3} |D|)$. In Section~\ref{sec:range} we turn to an
empirical comparison of our algorithm with~\cite{LHR+10} and additional related
work on range queries~\cite{XWG09,HRMS10,LHR+10,LiMi11,LiMiAdaptive12,LiMiMeasuring12}. Li and
Miklau~\cite{LiMi11,LiMiMeasuring12} show that these can all be seen as instances of the
\emph{matrix mechanism}. Empirically, we
demonstrate that our approach achieves lower error than a known
\emph{lower bound} on the matrix mechanism.

Ding et al.~\cite{DingWHL11} studied the problem of privately releasing data
cubes.
% (i.e., contingency tables). 
Their approach is based on various
optimization problems designed to select a set of measurements (i.e., cuboids) on
the data set sufficient to reconstruct a possibly larger target set of
cuboids. When we are interested in all $k$-dimensional cuboids over a
$d$-dimensional binary data set, it can be shown that at least $\binom
dk\approx d^k$ measurements are needed by the Ding et al.~algorithm to ensure bounded error on all cells.
In contrast, MWEM would require by virtue of Corollary~\ref{thm:main1}
less than $d^{1/3}\cdot n^{2/3}$ measurements (update iterations) for the
same purpose. Even for modest $k$ and $|D|$ this results in
significantly fewer measurements and thus smaller error. In
Section~\ref{sec:datacubes} we empirically compare our algorithm to the work
of Ding et al.

\section{Experimental Evaluation}
\label{sec:experimentation}

We evaluate MWEM across a variety of query classes, datasets, and
metrics as explored by prior work, demonstrating improvement in the quality of approximation (often
significant) in each case. The problems we consider are: (1) range
queries under the total squared error metric, (2) binary contingency table
release under the relative entropy metric, and (3) datacube release under
the average absolute error metric. Although contingency table release
and datacube release are very similar, prior work on the two have had
different focuses: small datasets over binary attributes vs.\ large
datasets over categorical attributes, low-order marginals vs.\ all
cuboids, and relative entropy vs.\ the average error within a
cuboid as metrics. We do not report on running times in this
section. Instead, Section \ref{sec:implementation} describes an
optimized implementation and evaluates it at scale. 

All of our experiments are done with $\epsilon$-differential privacy;
that is, $\delta = 0$. Our $(\epsilon', \delta)$-differential privacy
results are achieved by recharacterizing an $\epsilon$-differentially private execution with different privacy parameters. While the absolute
numbers in the privacy-utility trade-off may improve if we
recharacterize them using a non-zero $\delta$, 
it is not necessary to conduct separate experiments for this case.

\begin{figure*}[htbp]
\begin{center}
\includegraphics[height=4.5cm]{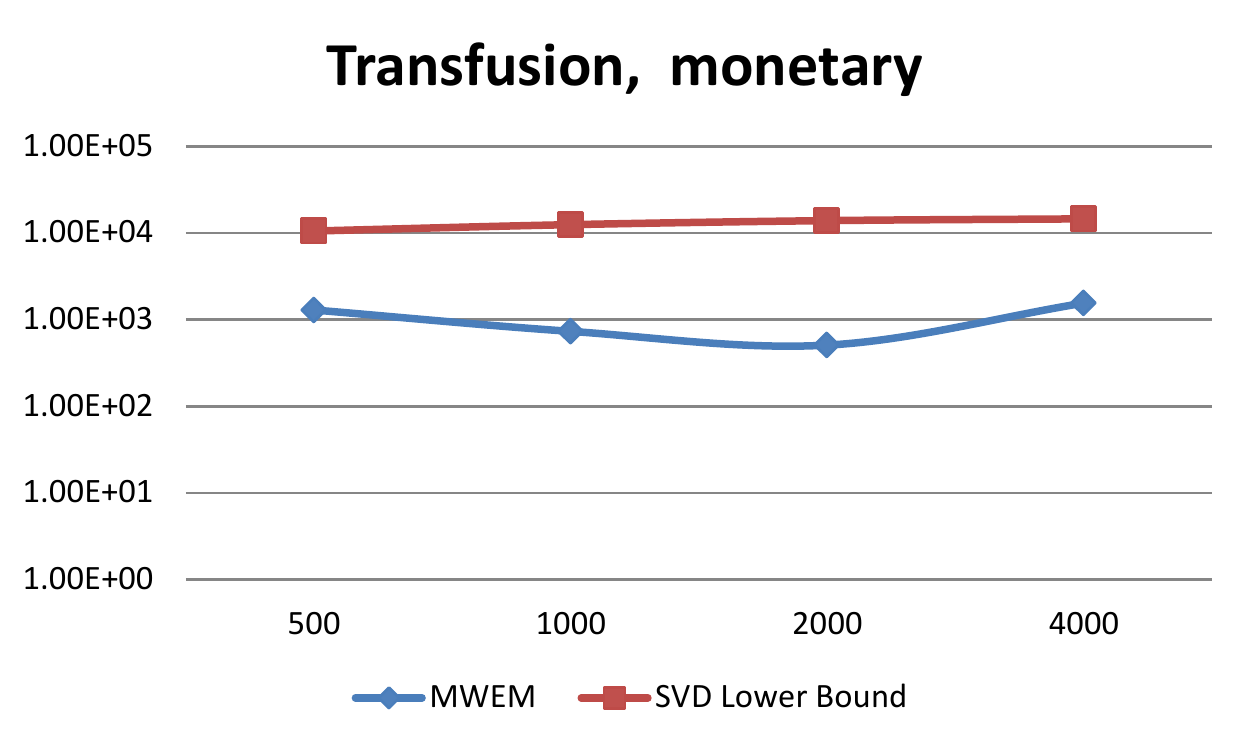}
\includegraphics[height=4.5cm]{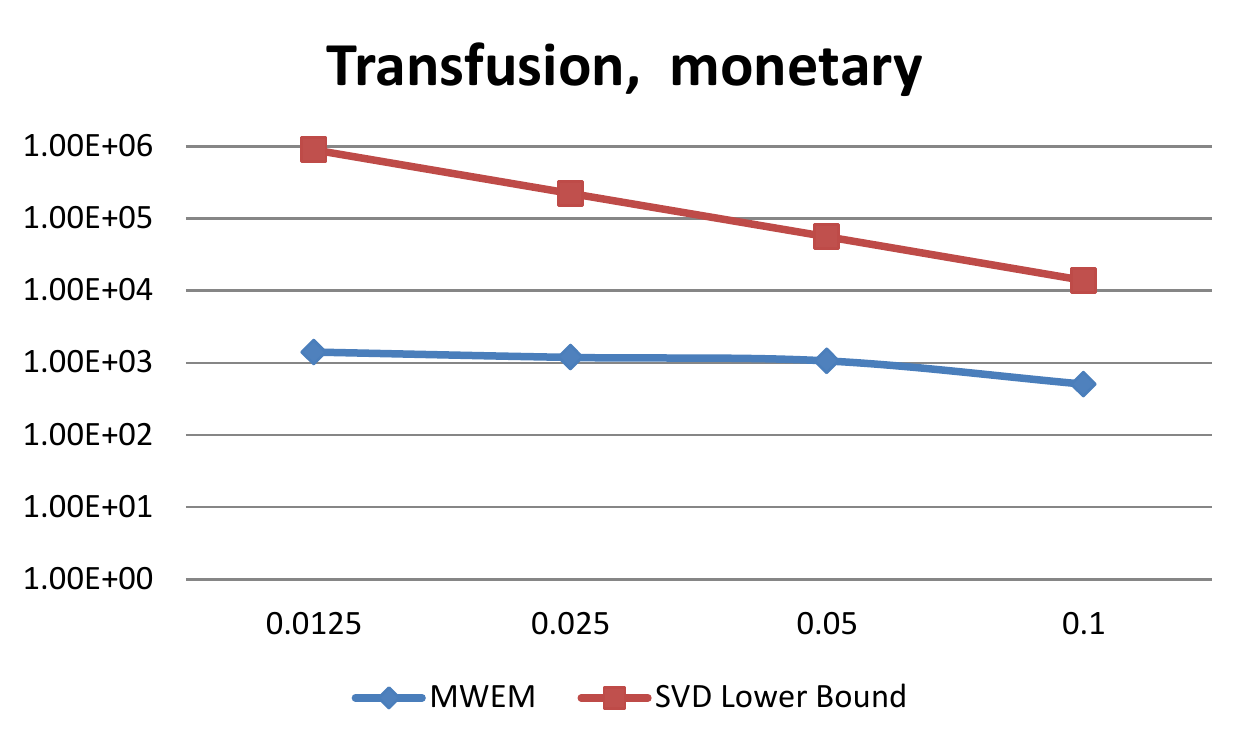}\\
\includegraphics[height=4.5cm]{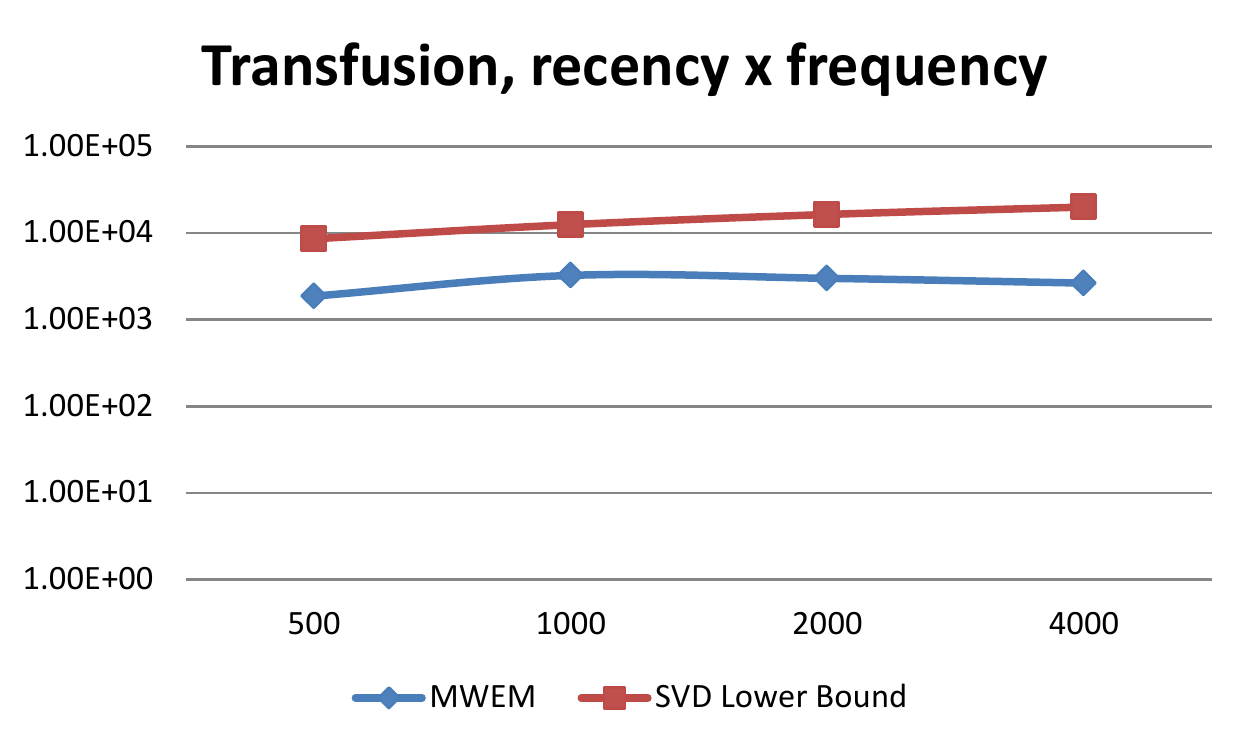}
\includegraphics[height=4.5cm]{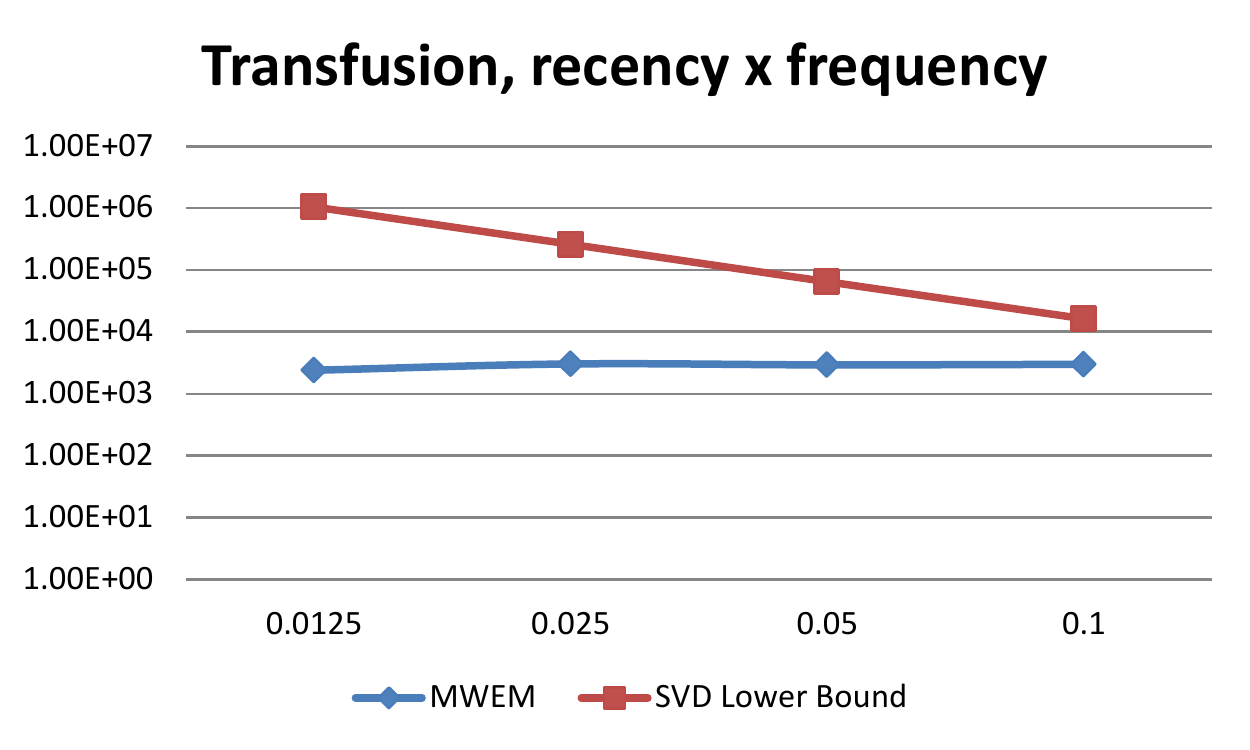}\\
\includegraphics[height=4.5cm]{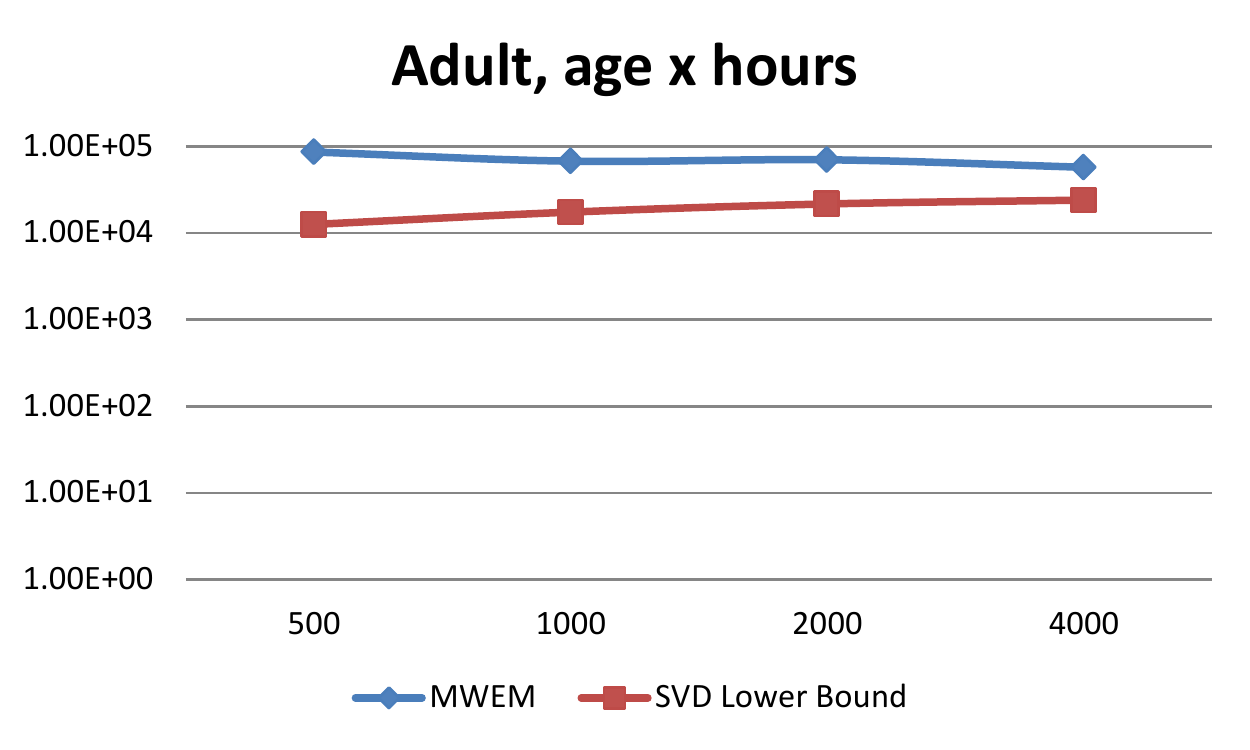}
\includegraphics[height=4.5cm]{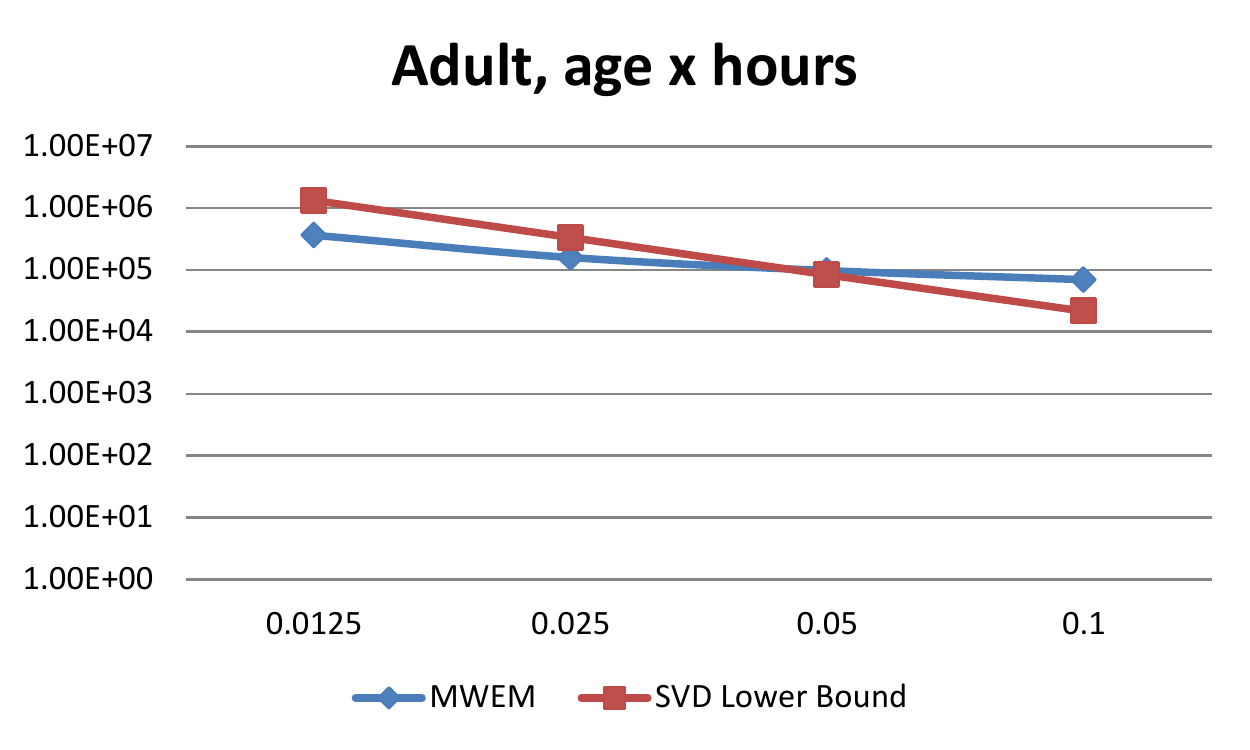}
\includegraphics[height=4.5cm]{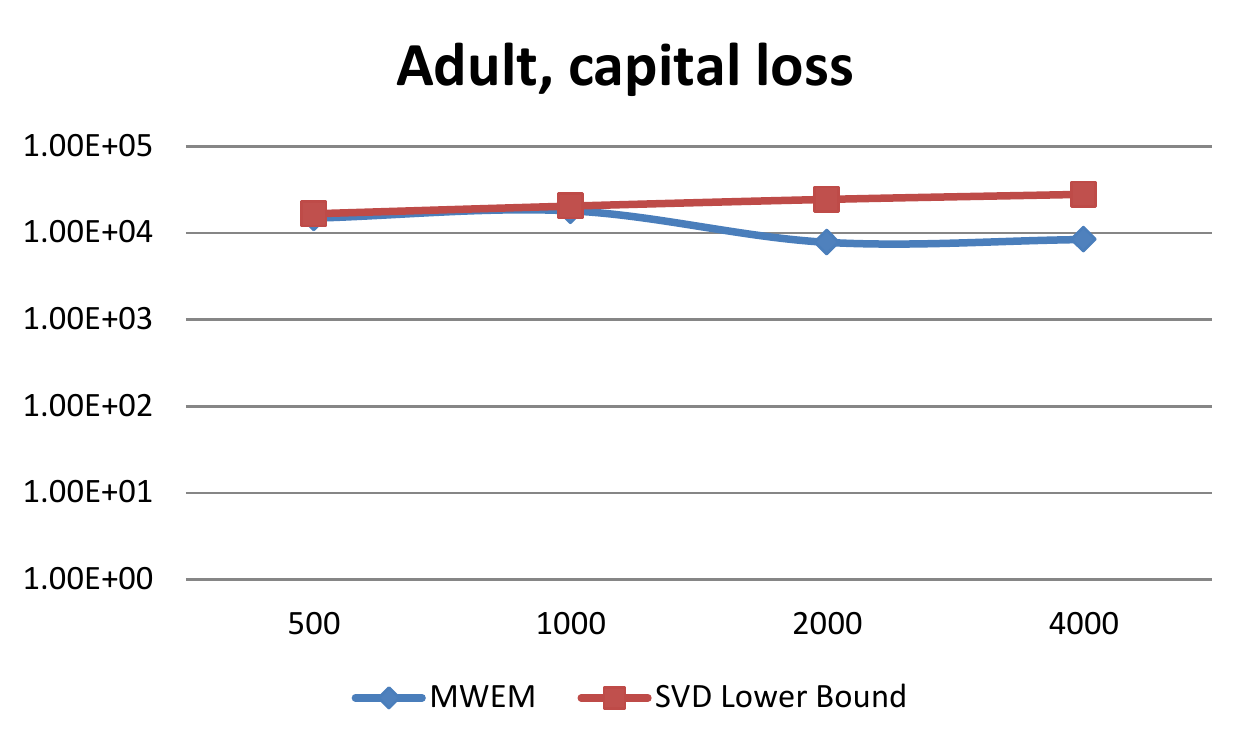}
\includegraphics[height=4.5cm]{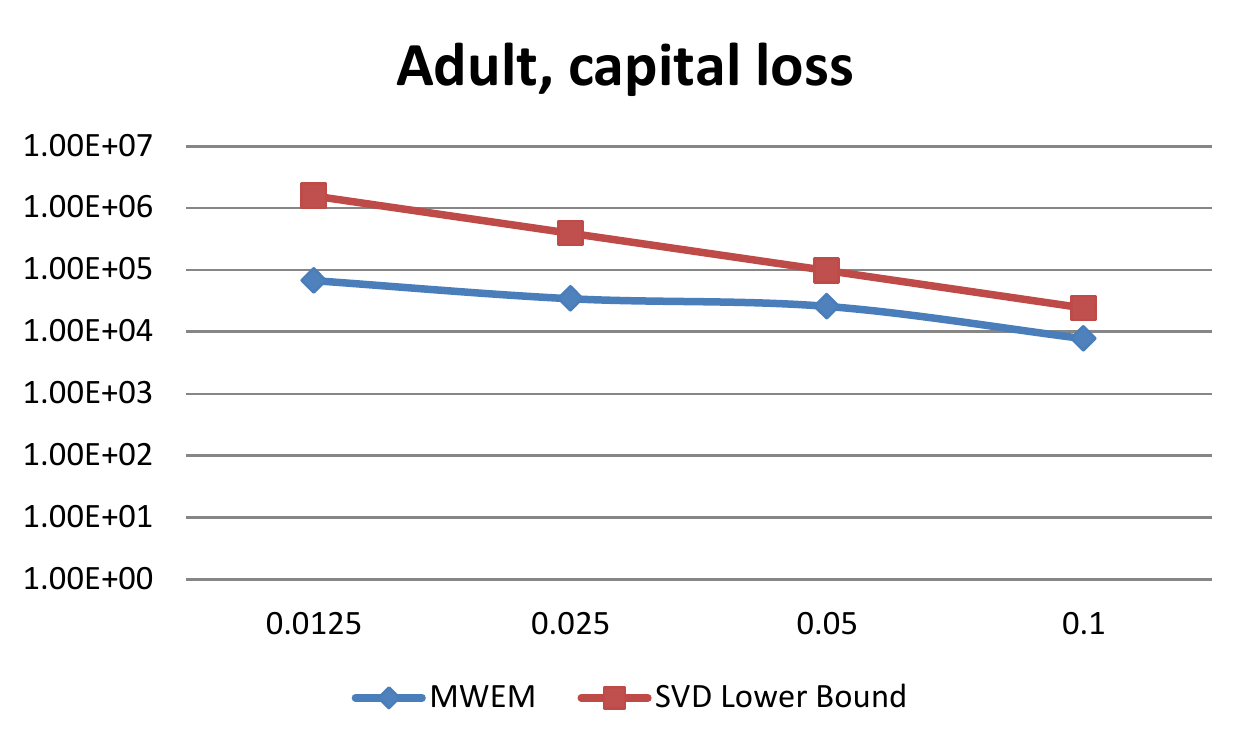}
\caption{Comparison of MWEM with the SVD lower bound on four data
sets. The $y$-axis in each plot represents the average squared error per query.
On the left hand side we vary the number of queries from $500$ to $4000$ while
keeping $\epsilon=0.1$ fixed. On the
right hand side, we vary $\epsilon$ from $0.0125$ to $0.1$ while keeping the
number of queries $|Q| = 2000$ fixed. 
We report the
average over $5$ independent repetitions of the experiment.
On
each data set for sufficiently small $\epsilon$ the error achieved by our
algorithm is lower than the SVD bound. 
}
\label{fig:range}
\end{center}
\end{figure*}

\subsection{Range Queries}
\label{sec:range}

A range query over a domain $D=\{1,\dots,N\}$ is a counting query specified 
by the indicator function of an interval $I\subseteq D.$ The concept extends
naturally to multi-dimensional domains $D=D_1\times \dots D_d$ where
$D_i=\{1,\dots,N_i\}.$ Here, a range query is defined by the indicator
function of a cross product of intervals: the function value is $1$ if and only if each coordinate lies in the associated interval.

\begin{figure*}[htbp] \begin{center}
\includegraphics[height=4cm]{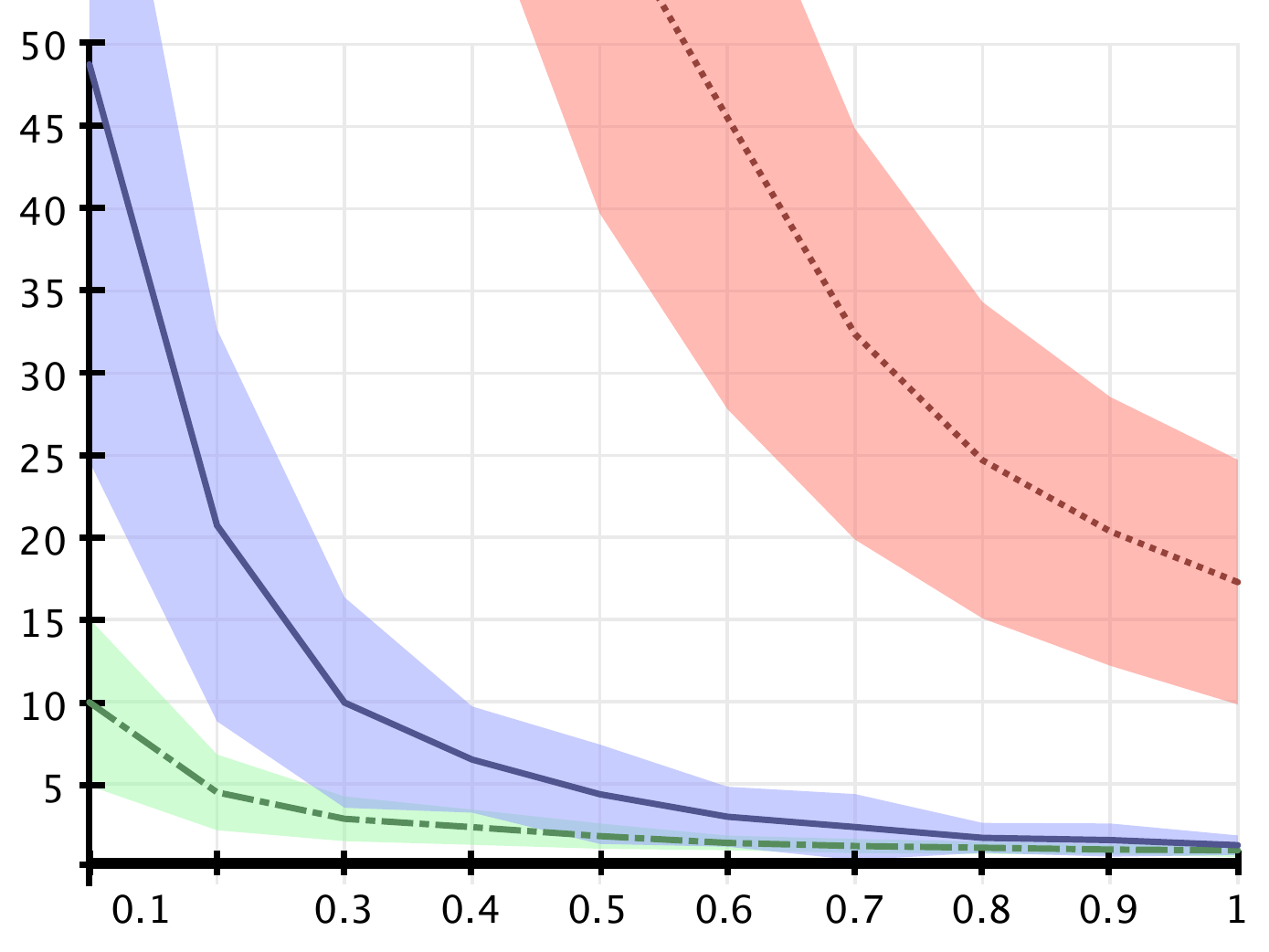}
\includegraphics[height=4cm]{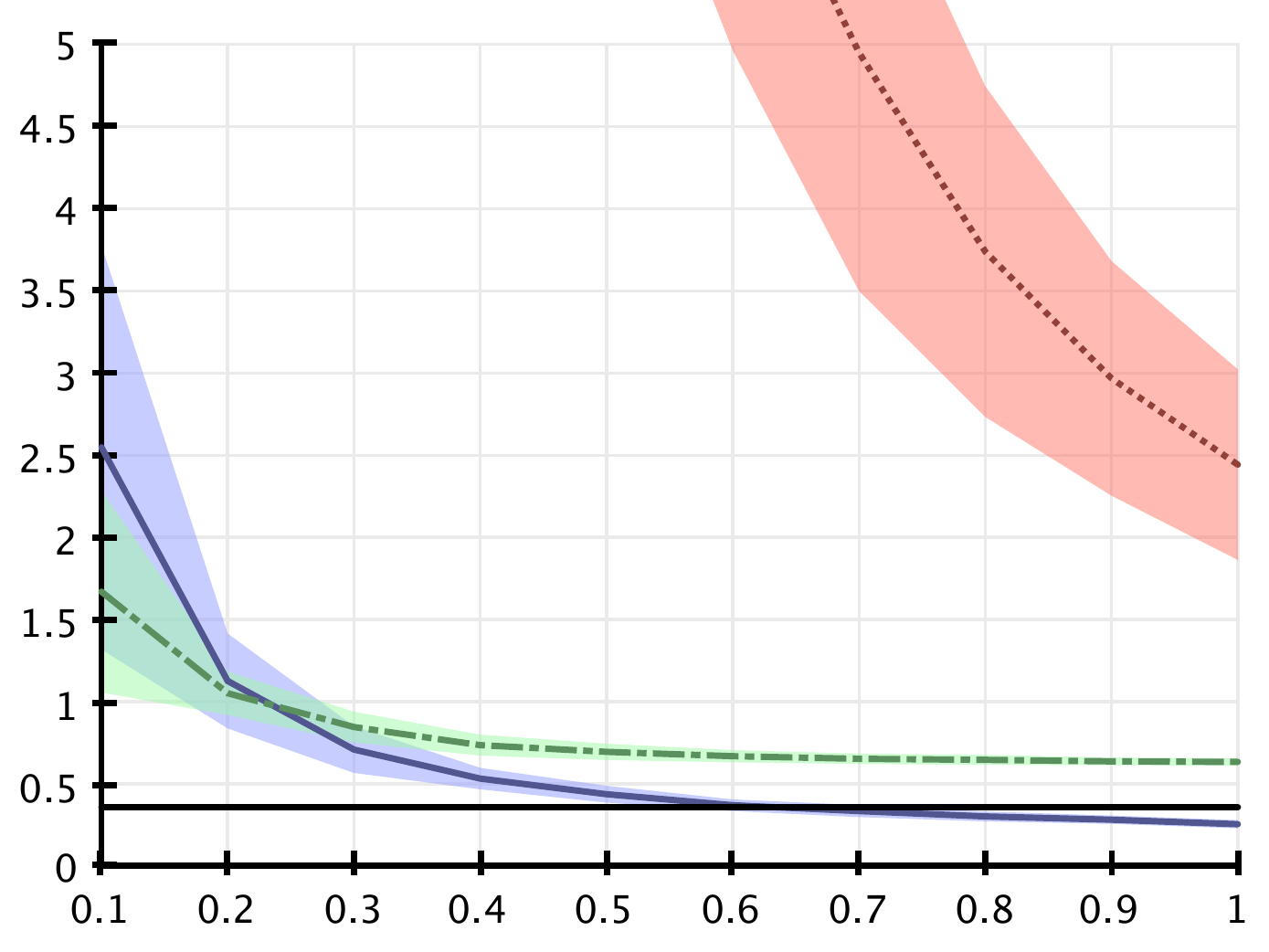}
\includegraphics[height=4cm]{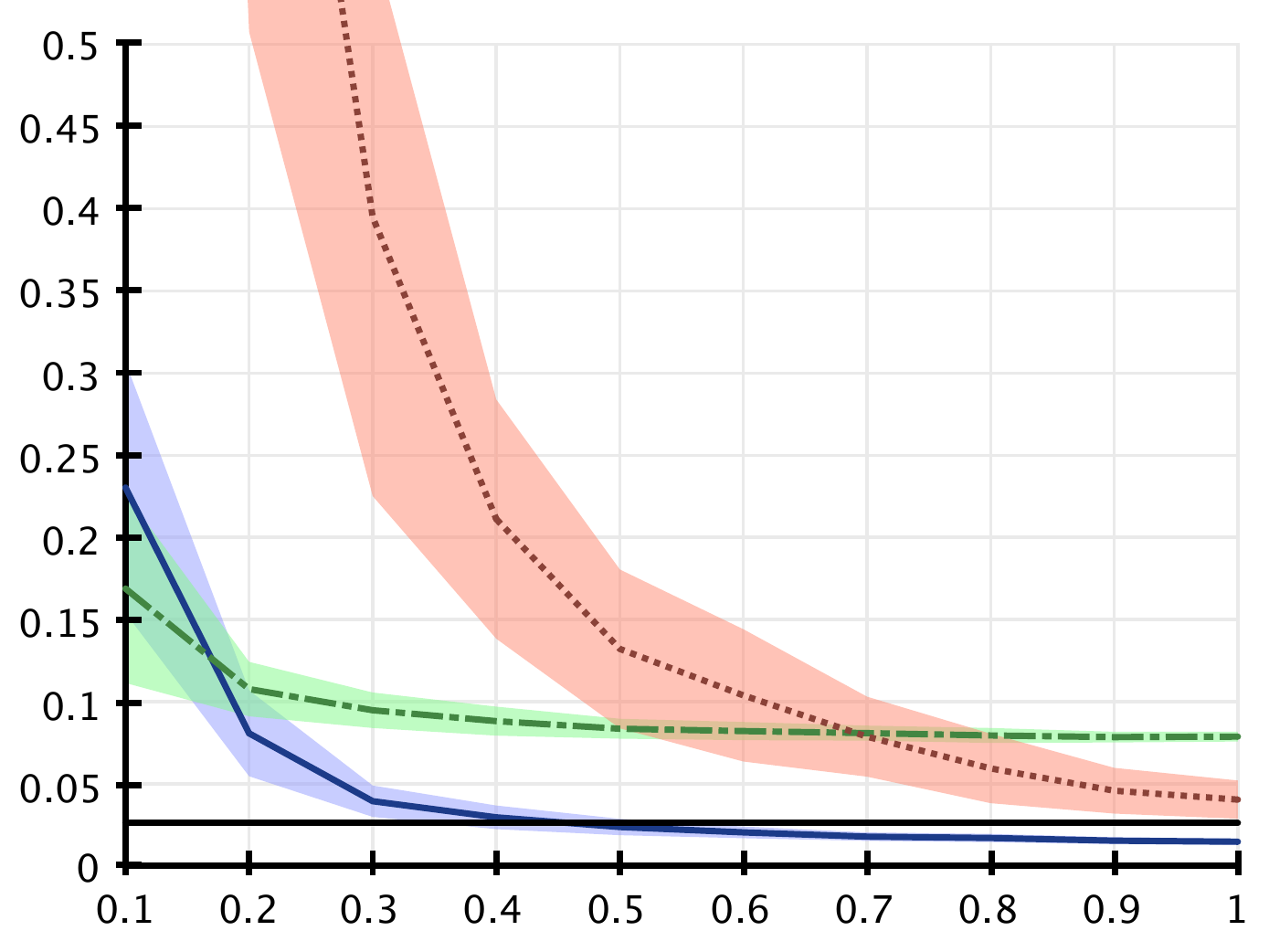}
\caption{Curves describing the behavior of algorithms on the mildew, rochdale,
and czech datasets, respectively. The x-axis is the value of epsilon
guaranteed, and the y-axis is the relative entropy between the produced
distribution and actual dataset. The lines represent averages across 100 runs,
and the corresponding shaded areas one standard deviation in each direction.
Red (dashed) represents the modified Barak et al.~algorithm, green
(dot-dashed) represents unoptimized MWEM, and blue (solid)
represents the optimized version thereof. The solid black horizontal line is
the stated relative entropy values from Fienberg et al.} \label{smalldatasets}
\end{center} \end{figure*}

Differentially private algorithms for range queries were specifically
considered by \cite{BLR08,XWG09,HRMS10,LHR+10,LiMi11,LiMiAdaptive12,LiMiMeasuring12}. As noted
in~\cite{LiMi11,LiMiMeasuring12}, all previously implemented algorithms for range
queries can be seen as instances of the \emph{matrix mechanism}
of~\cite{LHR+10}. Moreover, \cite{LiMi11,LiMiMeasuring12} show a
\emph{lower bound} on the total squared error achieved by the matrix
mechanism in terms of the singular values of a matrix
associated with the set of queries. We refer to this bound as the
\emph{SVD bound}.

We empirically evaluate MWEM for range queries on several
real-world data sets.  Specifically, we consider (a) the ``capital
loss'' attribute of the Adult data set~\cite{uci} corresponding to a
domain of size $4357,$ (b) the combined ``age'' and ``hours''
attributes of the Adult data set corresponding to a domain of size
$91\times 25$, (c) the combined ``recency'' and ``frequency''
attributes of the Blood Transfusion data set~\cite{uci,blood} corresponding
to a domain of size $80\times 55$, and (d) the ``monetary'' attribute of
the Blood Transfusion data set corresponding to a domain of size
$1251.$ We chose these data sets as they feature numerical attributes
of suitable
size. 
%\knote{This is the first mention of the blood transfusion data set and pretty much the first mention of Adult. They need to be properly introduced.}

\subsubsection{Experimental results}
In Figure~\ref{fig:range}, we compare the
performance of MWEM on sets of randomly chosen range queries
against the SVD lower bound proved by~\cite{LiMi11,LiMiMeasuring12}, (1) varying the
numbers of queries ($|Q|$) while keeping $\epsilon$ fixed, and (2) varying
$\epsilon$ while keeping the number of queries fixed. We chose
$T\in\{10,12,14,16\}$ for our experiments and reported the values for the best
setting of $T$ in each case. The reported numbers are averages of $5$
independent repetitions of the same experiment. We report the total squared
error and SVD bound normalized by the total number of queries. 
%Randomly chosen range queries exhibit large variance in their support which
%makes them an interesting testbed for algorithms. 

In all four cases we observe that the
error achieved by MWEM is \emph{lower} than the SVD lower bound on the
matrix mechanism, for sufficiently small privacy parameter~$\epsilon.$ The
improvement against the SVD bound is often by more than an order of
magnitude and sometimes by up to three orders of magnitude. 
%We stress that we
%compare our algorithm to a lower bound on the matrix mechanism. It is not even 
%clear that there exists an efficiently computable instance of the matrix
%mechanism that matches the lower bound. 
Moreover, the privacy guarantee
achieved by our algorithm is $(\epsilon,0)$-differential privacy whereas the
SVD lower bound holds for algorithms achieving the strictly weaker
guarantee of $(\epsilon,\delta)$-differential privacy with $\delta>0.$ The SVD bound depends on $\delta$; in our experiments we
fixed $\delta=1/n$ when instantiating the SVD bound, as any larger value of $\delta$ would permit exact release of individual records.

%\begin{figure}[ht]
%\begin{center}
%\caption{Explain figures\label{fig:range}}
%\label{fig:range2}
%\end{center}
%\end{figure}
%\begin{figure}[ht]
%\begin{center}
%\caption{Explain figures\label{fig:range}}
%\label{fig:range3}
%\end{center}
%\end{figure}
%\begin{figure}[ht]
%\begin{center}
%\caption{Explain figures\label{fig:range}}
%\label{fig:range4}
%\end{center}
%\end{figure}

\subsection{Contingency Tables}
\label{sec:contingency}

%In this section we consider an application of our general framework to
%the problem of differentially private contingency table release. For
%this subsection, 
A contingency table can be thought of as a table of records over $d$
binary attributes, and the $k$-way marginals of a contingency table
correspond to the $d \choose k$ possible choices of $k$ attributes,
where each marginal is represented by the $2^k$ counts of the records
with each possible setting of attributes. Marginals of contingency tables
exhibit interesting correlations between queries, and have a
significant role in the practice of official statistics.  When
statistical inference is performed over contingency tables, data
analysts seek sets of low-dimensional marginals (i.e., containing
relatively few attributes at a time) that fit the data well. Our goal in
releasing a differentially private contingency table is to preserve
these low-dimensional marginals.

In previous work, Barak et al.~\cite{BCD+07} describe an approach to
differentially private contingency table release through the Hadamard
transformation. If we view a contingency table as vector with the
coordinates ordered lexicographically by (binary) attribute settings,
the Hadamard transformation corresponds to multiplication by the
Hadamard matrix, defined recursively as
\[ 
H_{n+1} = \left[\begin{array}{rr} H_n & H_n 
\\ H_n & -H_n \end{array}
\right]\mcom 
\qquad H_1 = [1]\mper 
\] 
%Each result in the multiplication
%corresponds to the measurement of a  coefficient. Intuitively, there is
%one such measurement for each subset of attributes, reflecting the counts of
%records with even and odd parities of attribute values among the subset of
%attributes. 
Importantly, all $k$-dimensional marginals can be exactly recovered by
examination of relatively few entries in the transformed vector (roughly
$d \choose k$ out of $2^d$). Each of these entries corresponds to a set
of at most $k$ out of $d$ attributes, where the entry is the difference between
the number of records with an even number of bits set and the number of
records with an odd number of bits set. Rather than explain which
entries to combine and how (details can be found in Barak et 
al.~\cite{BCD+07}) we simply take the measurements as our query set,
thereby ensuring that our output will respect them. The marginals can then be derived directly from the
dataset we produce.
%, rather than indirectly from the results of the measurements we took. 
Note that we do not need to specify the relationship
  between the measurements we take and the quantities of interest; we
  only need that the relationships exist. This is helpful in settings where the
  dependence is complicated or inexact.
%\knote{probably needs further clarification}

\pagebreak

\subsubsection{Experimental Setup}

In this section, we consider several datasets used in the statistical
literature. The datasets, detailed in
Table~\ref{datasets}, range from small (70 records) to
substantial (21k records).

\begin{table}[tdp] \begin{center} \begin{tabular}{l|c|c|c|} & records &
attributes & non-zero / total cells\\ \hline mildew & 70	 & 6 & 22 /
64\\ \hline czech & 1841 & 6 & 63 / 64\\ \hline rochdale & 665 & 8 & 91 /
256\\ \hline nltcs & 21574 & 16 & 3152 / 65536  \\ \hline \end{tabular}
\caption{Details of the four datasets we consider.}
\label{datasets} \end{center} \end{table}

We evaluate our approximate dataset with the truth using \emph{relative entropy}, also known as the Kullback-Leibler (or KL) divergence. Formally, the relative entropy between our two distributions ($A/n$ and $B/n$) is
$$ RE(B||A) = \sum_{x \in D} B(x) \log (B(x) / A(x)) / n\; .$$
This measurement has
appealing properties for statistical inference, and is used in previous
statistical work on the problem. 
%Other measurements, for example directly
%measuring the error in each cell of a cuboid, certainly exist (and will be considered in the next subsection), but our goal is
%ultimately learning models that fit the data well, and it is not immediately
%clear how accuracy in these measurements result in statistical quality of fit.
%
Our experiments are intended both to compare our approach to the prior work of
Barak et al.~\cite{BCD+07} as well as to evaluate it in absolute terms. For
the purposes of our experiments, Barak et al.~is represented by
the approach that takes all low order measurements with a uniform
level of accuracy and applies Multiplicative Weights (the approach in
\cite{BCD+07} involved a linear programming 
step instead of Multiplicative Weights, which we have found only hurts
its performance with respect to relative entropy). 
For the absolute comparison, we invoke the work of Fienberg et
al.~\cite{FRY10} on several of these datasets where they report absolute
numbers for quality of fit (in terms of relative entropy) without privacy
constraints, but at a certain level of statistical generality (that is, they do not want to overfit). 
%\knote{from Mila's comments: why is the KL-divergence not 0 if
%> there's no privacy constraint? What is Fienberg doing?}

We consider both a standard application of MWEM, and one in which we use
two optimizations described in Section~\ref{subsec:variations}:
re-execution of measured queries and
initialization using a histogram over the domain $D$ of records.

\subsubsection{Experimental Results}

We first evaluate MWEM on several small datasets in common use by
statisticians. Our findings here are fairly uniform across the datasets: the
ability to measure only those queries that are informative about the dataset
results in substantial savings over taking all possible measurements. We
evaluate both our theoretically pure algorithm and its heuristic improvement
as discussed in the previous section, against a modified version of the
algorithm of Barak et al.~\cite{BCD+07} (improved by integrating the Multiplicative
Weights of Hardt-Rothblum~\cite{HR10}), and the accepted ``good" non-private
relative entropy values from Fienberg et al.~\cite{FRY10}. The trade-off
between relative entropy and $\epsilon$ for three datasets appears in Figure
\ref{smalldatasets}.  In each case, we see that we noticeably improve on the
algorithm of Barak et al., and in many cases our heuristic approach matches
the good non-private values of~\cite{FRY10}, indicating that we can
approach levels of accuracy at the limit of statistical validity.

We also consider a larger dataset, the National Long-Term Care Study
(NLTCS), in Figure \ref{largedataset}. This dataset contains orders of
magnitudes more records, and has 16 binary attributes. For our initial
settings, maintaining all three-way marginals, we see
similar behavior as above: the ability to choose the measurements that
are important allows substantially higher accuracy on those that
matter.  However, we see that the algorithm of Barak et
al.~\cite{BCD+07} is substantially more competitive in the regime
where we are interested in querying all two-dimensional marginals,
rather than the default three we have been using. In this case, for
values of epsilon at least $0.1$, it seems that there is enough signal
present to simply measure all corresponding entries of the Hadamard transform; each is
sufficiently informative that measuring substantially fewer at higher
accuracy imparts less information, rather than more.
%\knote{is this the right place to definte ``cuboid''? or in the intro?}

For every dataset and query set, there is some sufficiently high epsilon level
where the judicious selection of queries is no longer required. In such
regimes, the approach we present in this paper does not provide an improvement
over more naive approaches. The impact of our approach returns if we increase
the dimension of the marginal that must be preserved (dramatically increasing the
number of measurements Barak et al.~would take) or if we decrease~$\epsilon$ to a
level such that the majority of two-way measurements are not above the
noise level, both of which are demonstrated in Figure \ref{largedataset}. However, the analyst's goal should be to get the right output for
the analysis task at hand, under the supplied privacy constraints. In some
cases this may not require our advanced query selection.

\begin{figure*}[htbp]
 \begin{center}
\includegraphics[height=4cm]{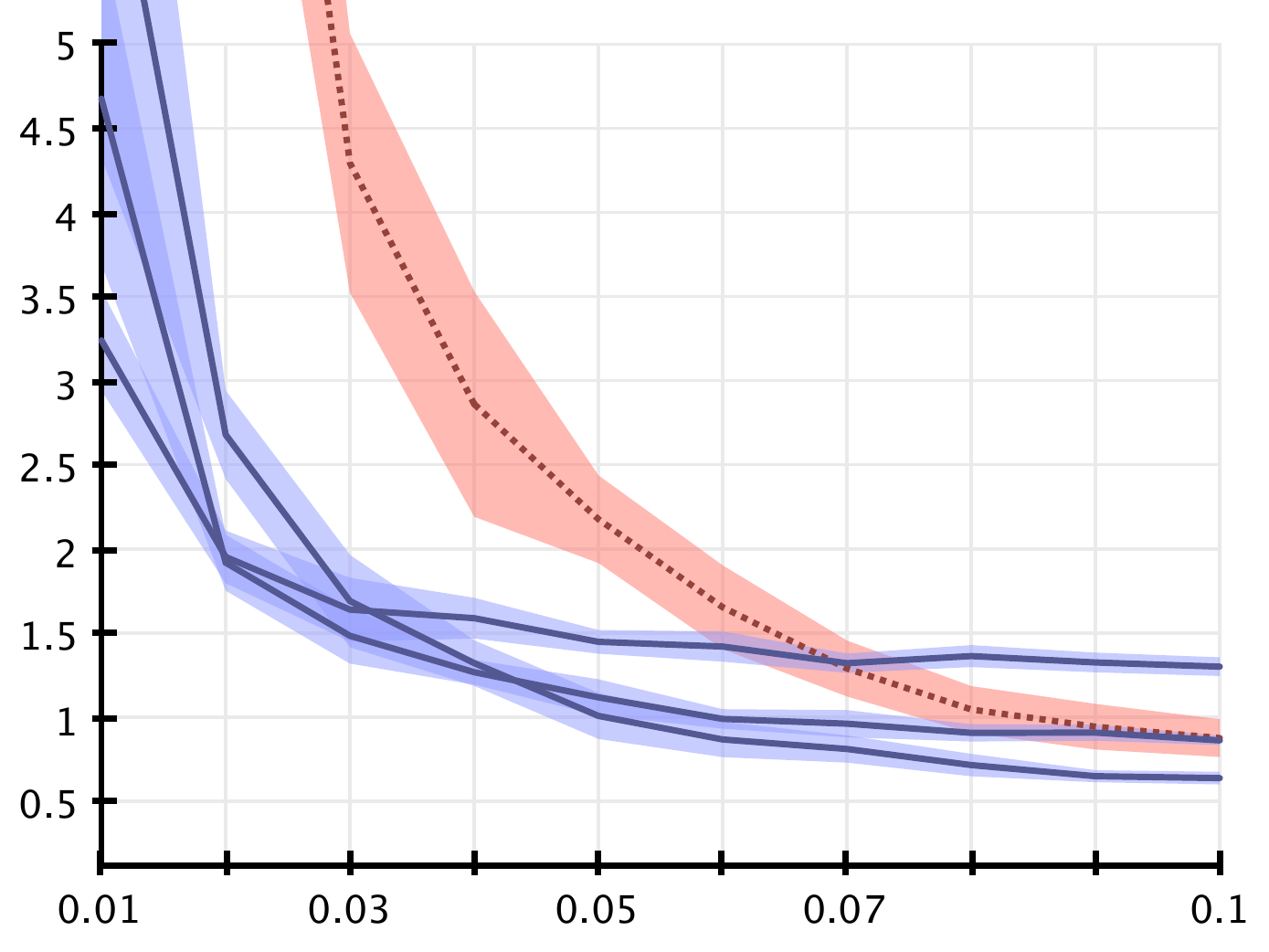}
\includegraphics[height=4cm]{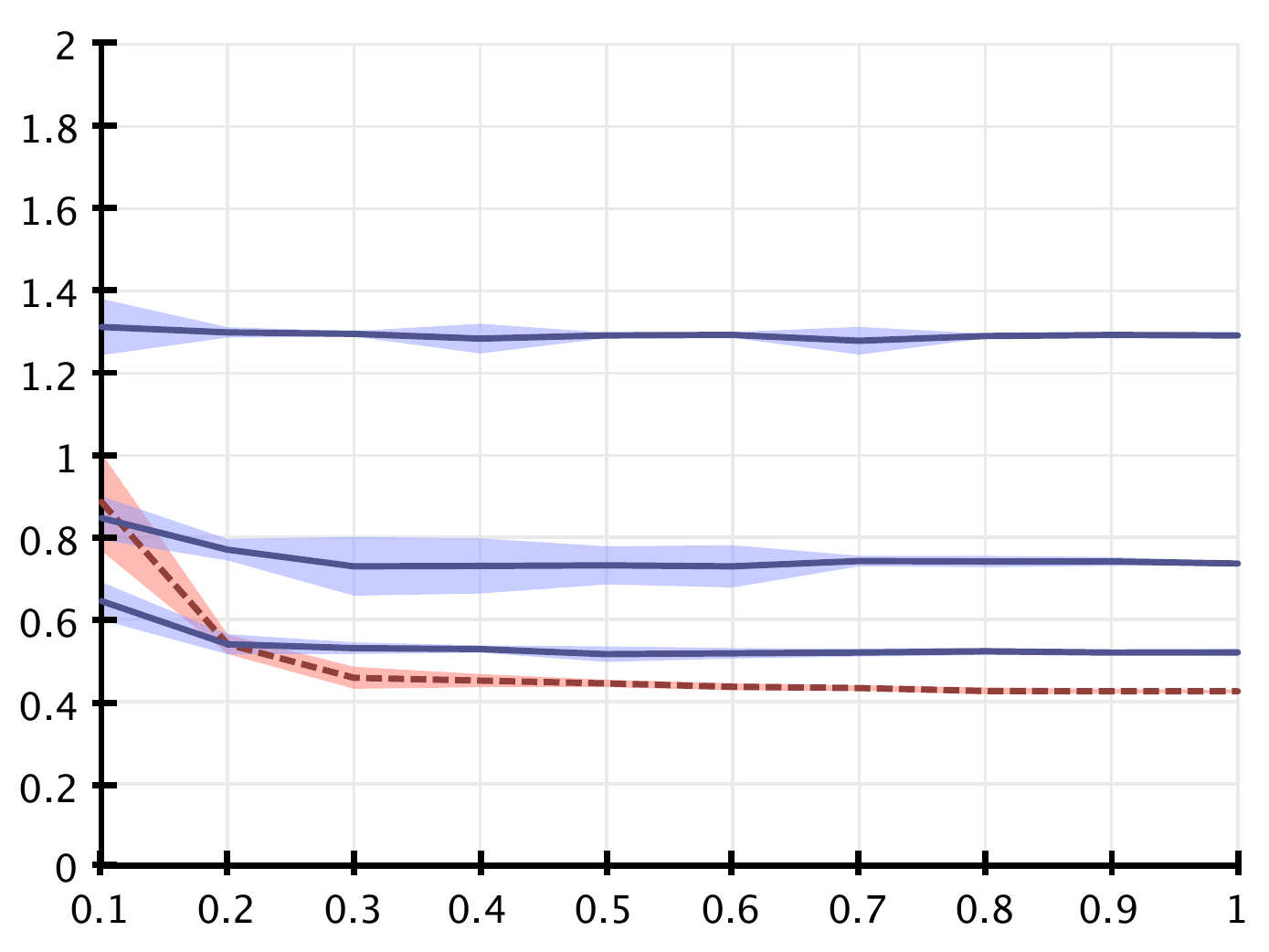}
\includegraphics[height=4cm]{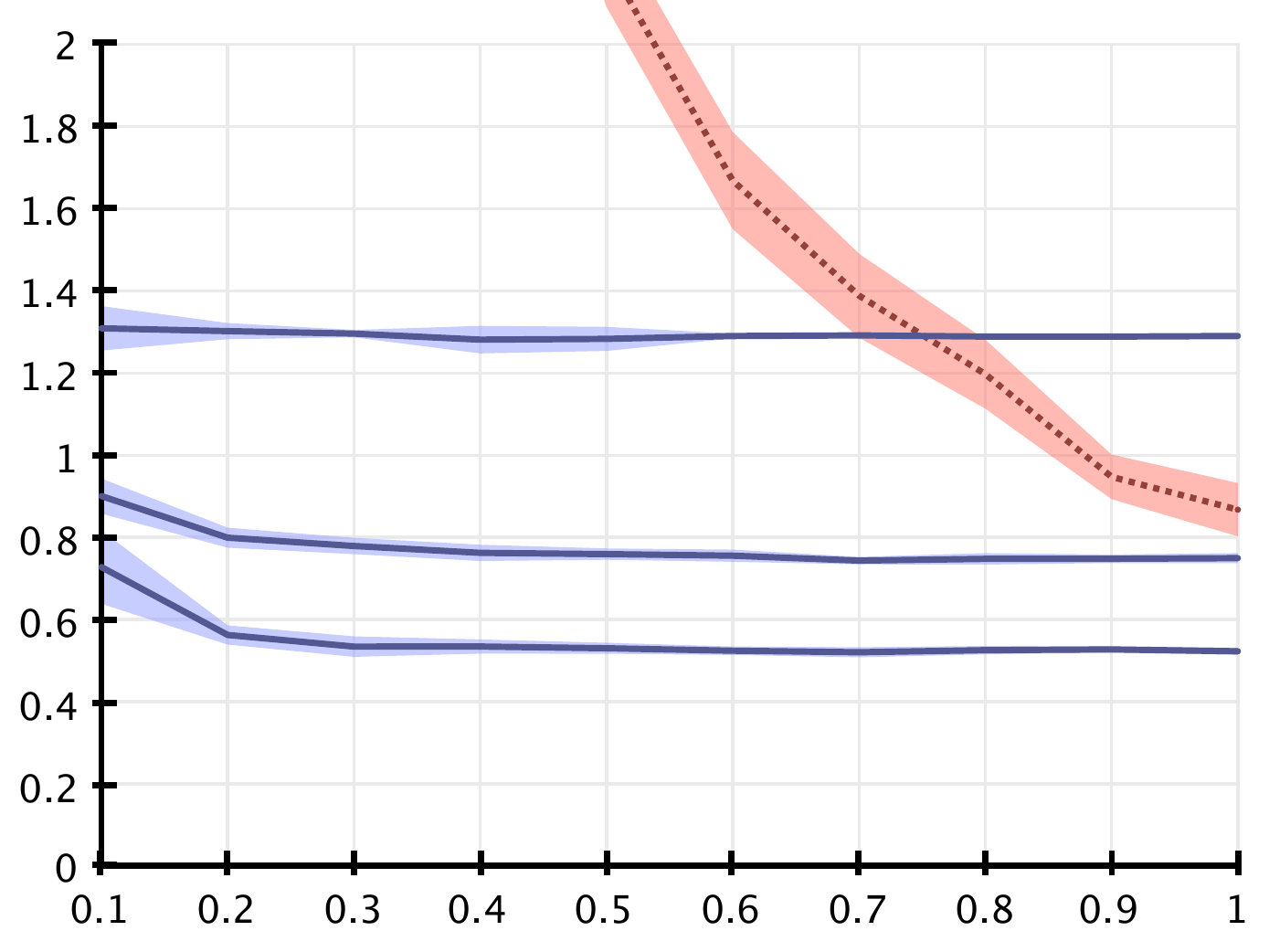}
\caption{Curves comparing our approach with that of Barak et al.~on the
National Long Term Care Survey. The red (dashed) curve represents Barak et al,
and the multiple blue (solid) curves represent MWEM, with 20, 30, and 40 queries (top to bottom,
respectively). From left to right, the first two figures correspond to degree
2 marginals, and the third to degree 3 marginals. 
%We see
%that the exponential mechanism improves accuracy when the privacy requirements
%are strong relative to the number of measurements (the first and third
%graphs). The middle graph demonstrates that with few measurements and
%sufficient privacy budget, one does best by simply measuring everything.  
As
before, the x-axis is the value of epsilon guaranteed, and the y-axis is the
relative entropy between the produced distribution and actual dataset. The
lines represent averages across only 10 runs, owing to the high complexity of
Barak et al.~on this many-attributed dataset, and the corresponding shaded
areas one standard deviation in each direction.\label{largedataset}} 
\end{center} \end{figure*}

\subsection{Data Cubes}
\label{sec:datacubes}
We now change our terminology and objectives, shifting our view of
contingency tables to one of datacubes. The two concepts are
interchangeable, a contingency table corresponding to the datacube, and
a marginal corresponding to its cuboids. However, the datasets studied
and the metrics applied are different. We focus on the restriction of the Adult dataset~\cite{uci} to its
eight categorical attributes, as done in \cite{DingWHL11}, and evaluate
our approximations using average error within a cuboid, also as done in
\cite{DingWHL11}.

Although MWEM is defined with respect to a single query at a time, it
generalizes to sets of counting queries, as reflected in a cuboid. The
Exponential Mechanism can select a cuboid to measure using a quality score
function summing the absolute values of the errors within the cells of the cuboid. We also
(heuristically) subtract the number of cells from the score of a cuboid to
bias the selection away from cuboids with many cells, which would collect Laplace
error in each cell. This subtraction does not affect privacy properties. An entire cuboid can be measured with a single
differentially private query, as any record contributes to at most one cell
(this is a generalization of the Laplace Mechanism to multiple dimensions,
from \cite{DMNS06}). Finally, Multiplicative Weights works unmodified, increasing
and decreasing weights based on the over- or under-estimation of the count to
which the record contributes.

\subsubsection{Experimental Results}

We apply MWEM to the Adult dataset in several ways:
restricting our computation to 2-way cuboids, 3-way cuboids, and with
no restriction. As it turns out, the latter two are identical, in that
the higher order cuboids have too many cells to be appealing to the
algorithm, and are ignored by MWEM. In fact, the 3-way cuboid experiment used
only one 3-way cuboid, albeit a very helpful one. We plot the maximum
cuboid error and average cuboid error in Figure
\ref{fig:cuboids}. Comparing our final 3-way measurements ($\max =
138.71$ and $\textrm{avg} = 13.21$) to the $\epsilon = 1$ reading in
Figure 3 of \cite{DingWHL11}, the maximum error appears comparable to
their best result, whereas the average error appears approximately four
times lower than their best result. Of note, our results are achieved
by a single algorithm, whereas the best results for maximum and
average error in \cite{DingWHL11} are achieved by different
algorithms, each designed to optimize one specific metric.

\section{A Scalable Implementation}
\label{sec:implementation}

The implementation of MWEM used in the previous experiments quite literally
maintains a distribution $A_i$ over the elements of the universe~$D$. 
As the number of attributes grows, the universe~$D$ grows exponentially, 
and it can quickly become infeasible to track the distribution explicitly. 
In this section, we consider a
scalable implementation with essentially no memory footprint, whose
running time is in the worst case proportional to $|D|$, but which for
many classes of simple datasets remains linear in the number of
attributes.

First, recall that the heart of MWEM is to use Multiplicative
Weights to maintain a distribution $A_i$ over $D$ that is then used in
the Exponential Mechanism to select queries poorly approximated by the
current distribution. 
%For each query~$q$, we must be able to
%determine
%\[
%q(A_i)=\sum_{x \in D} q(x) \times A_i(x)\mper
%\]
From the definition of the Multiplicative Weights distribution, we see
that the weight $A_i(x)$ can be determined from the history $H_i = \{(q_j,m_j) : j \le i \}$:
\[
A_i(x) \propto \exp\left(\sum_{j \le i} q_j(x) \times (m_j - q_j(A_{j-1}))/2n\right)\mper
\]

We explicitly record the scaling factors $l_j = m_j - q_j(A_{j-1})$
as part of the history $H_i = \{(q_j, m_j, l_j) : j \le i \}$, to remove
the dependence on prior $A_{j}$. If one is willing to 
iterate over all $x \in D$, one can evaluate each query~$q(A_i)$
using only this history. Additionally, the summation over $x \in D$
is extremely parallelizable, and distributes easily across multiple
cores and computers. While the implicit representation of the
distribution $A_i$ in terms of the history represents a substantial savings in
memory footprint, $D$ can still be exponential in the number of
attributes, and even a large cluster is quickly overwhelmed
by the required computation.

\begin{figure}[tbp] 
\begin{center}
\includegraphics[height=4.5cm]{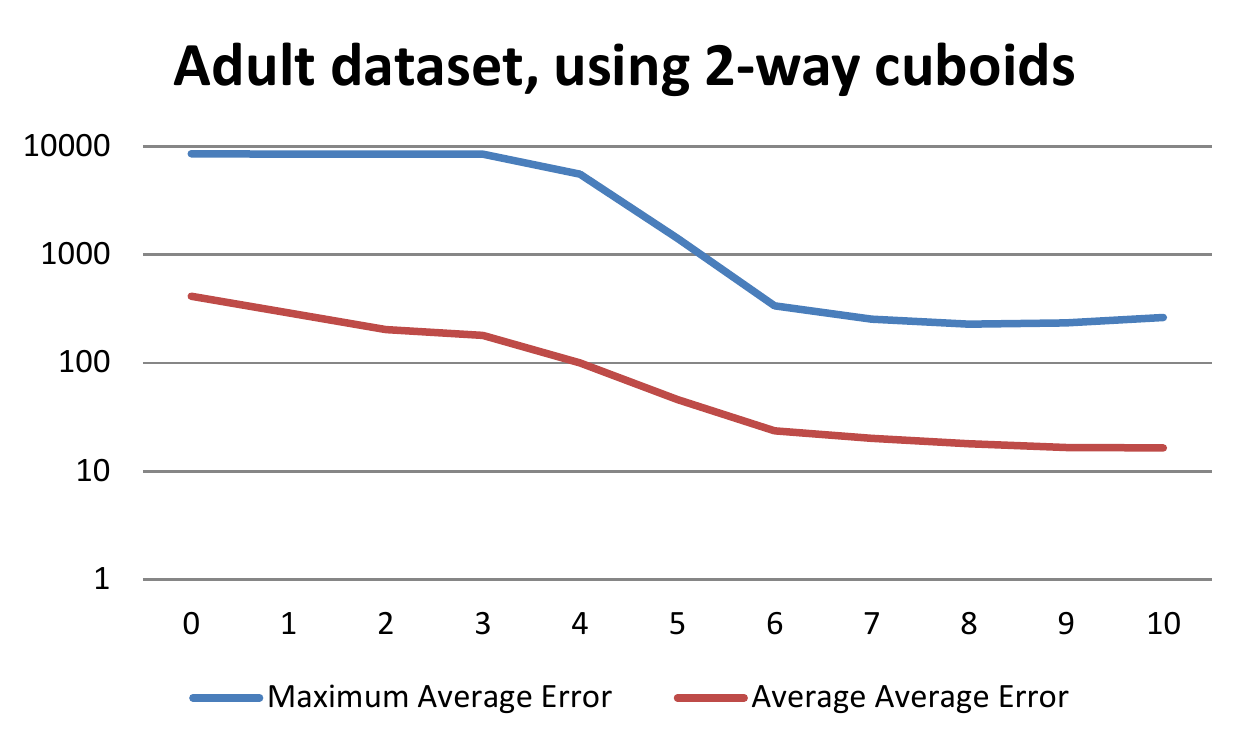}
\includegraphics[height=4.5cm]{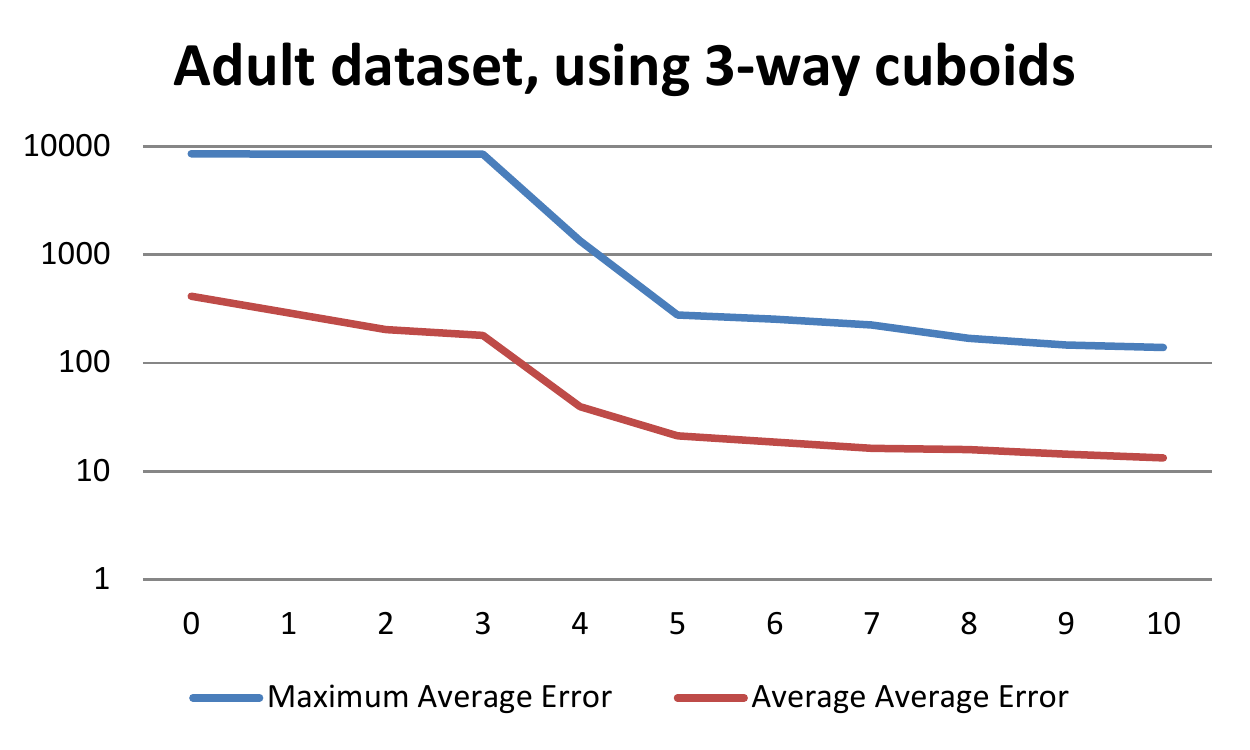}

\caption{Maximum and average error across all 256 cuboids as 10 steps of $\epsilon = 1$ MWEM proceed, where the cuboid error is taken to be the average over its cells of the absolute values of the cell's error.}
\label{fig:cuboids} 
\end{center}
\end{figure}

\begin{figure}[tbp] \begin{center}
\includegraphics[height=4.5cm]{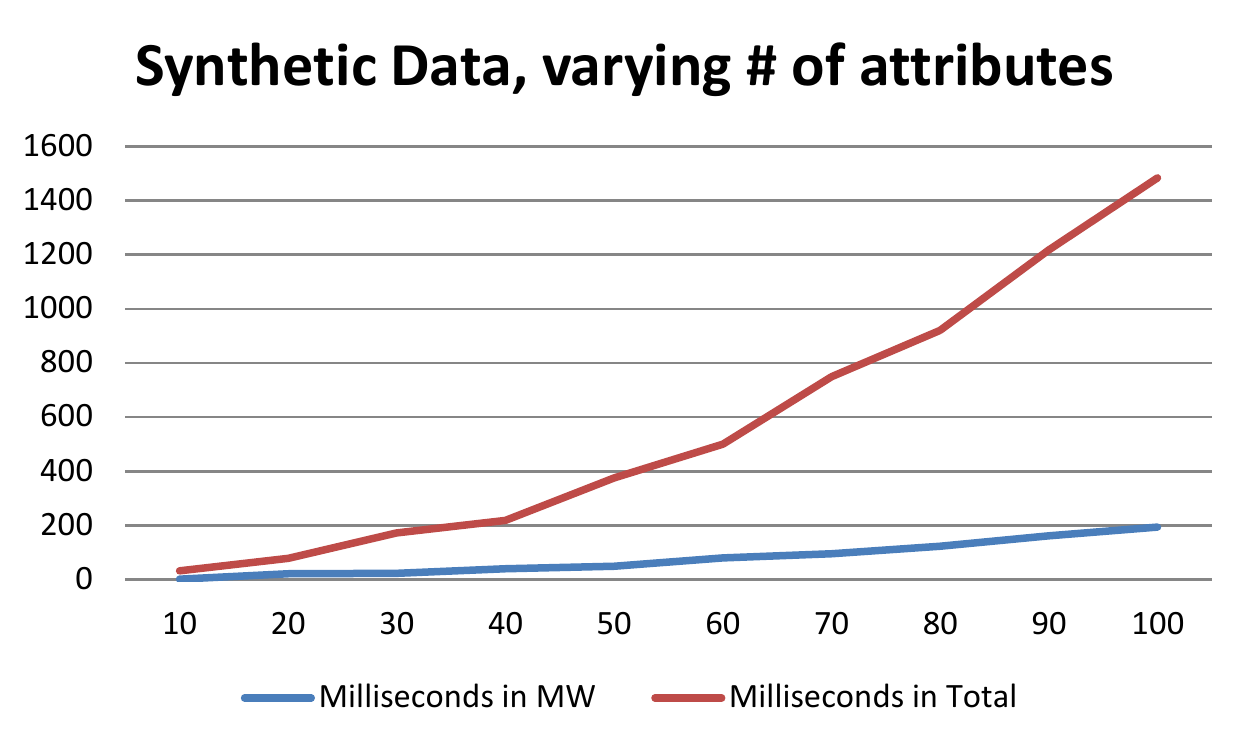}
\includegraphics[height=4.5cm]{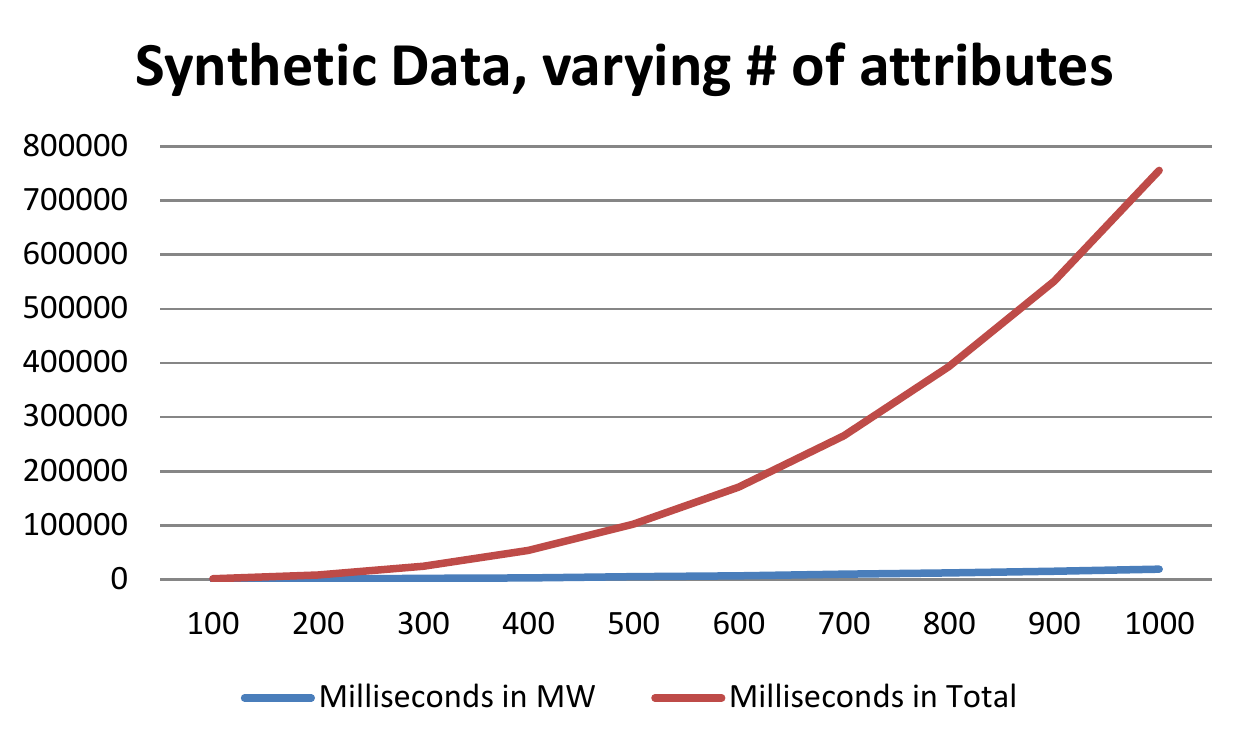}
\caption{Milliseconds spent in MWEM logic, and Milliseconds spent in total, with the latter containing time spent evaluating queries against the source dataset, the $q_i(B)$ evaluations. Even for 1000 binary attributes, we spend only 19 seconds in MWEM.} \label{fig:synthetic}
\end{center} \end{figure}

\begin{figure}[tbp]\begin{center}
\includegraphics[height=4.5cm]{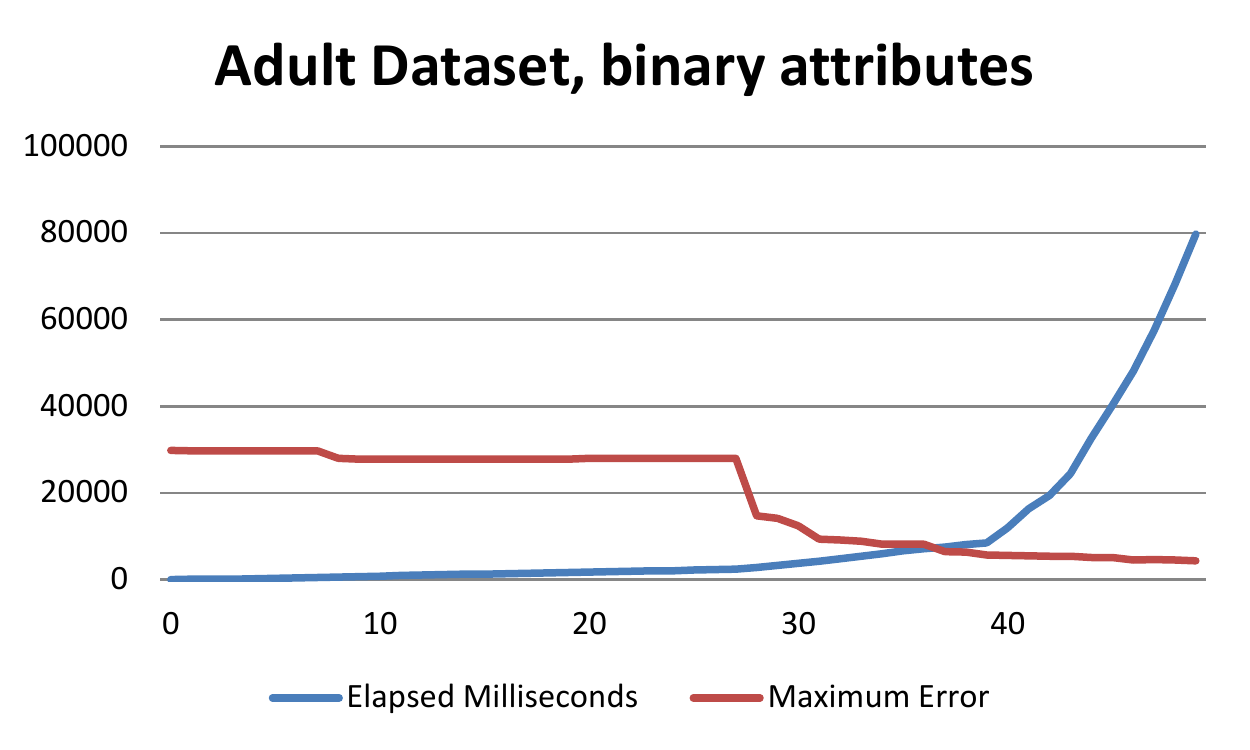}
\includegraphics[height=4.5cm]{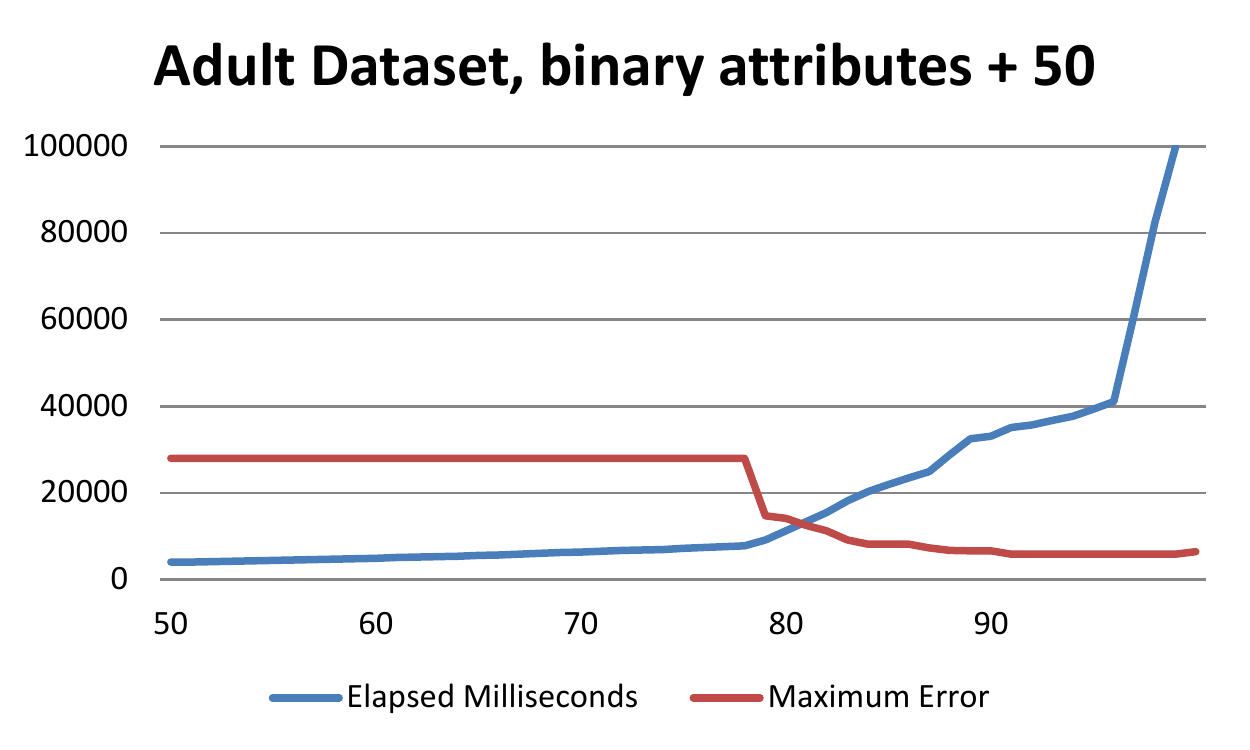}
\caption{ \label{fig:adultbinary}Elapsed milliseconds (increasing) and maximum error (decreasing) as iterations of MWEM proceed. In the second figure, we have added 50 attributes and not plotted the first 50 measurements. The shapes of the curves are very similar, demonstrating how MWEM is capable of ignoring irrelevant attributes. The second experiment takes more time due to the larger number of queries: (77 choose 3) versus (27 choose 3).} 
\end{center} \end{figure}

Fortunately, the distribution we maintain has additional properties we
can exploit. The domain $D$ is often the product of many attributes. If we partition these attributes into disjoint parts $D_1, D_2, \ldots D_k$ so that no query in $H_i$ involves attributes from more than one part, then the distribution produced by Multiplicative Weights is a product distribution over $D_1 \times D_2 \times \ldots D_k$. 
\[
\sum_{x \in D_1 \times D_2 \times \ldots D_k} q(x) \times A_i(x) = \prod_{1 \le j \le k}\left( \sum_{x_j \in D_j} q(x_j) \times A^j_i(x_j)\right) \; .
\]
where $A_i^j$ is a mini Multiplicative Weights over attributes in part $D_j$, using only the relevant queries from $H_i$. 
%The partitioning into $D_1, D_2, \ldots D_k$ correspond to the connected components of a graph on attributes, where an edge exists between attributes if they are found in a common query $q_j$.

Importantly, each of the summations above is now over a potentially much
smaller space, reducing the running time from exponential in the number
of attributes in $D$ to the sum of the exponentials in the attributes in
the $D_j$, each of which can be quite small. In the worst case, all
attributes are entangled and we have improved nothing, but for many
realistic datasets independent or irrelevant groups of attributes exist,
and come at essentially no cost in terms of running time or memory
footprint.

\subsection{Scaling Evaluation}

We now experimentally evaluate the scaling performance of this
optimized algorithm on synthetic datasets chosen to highlight its
behavior, and on one real-world dataset in an attempt to sketch how it
might behave in practice. Our experiments are conducted on an AMD
Opteror `Magny-Cours' with 48 processors at 1.9GHz. 

Our first experiment is run on a synthetic dataset containing 100,000
records, over a number of binary attributes varying from 10 to 100, and
then from 100 to 1000. We chose to set each attribute with probability
$p = 0.1$, and $T$ equal to the number of attributes.\footnote{Setting
  the attributes with probability $p = 0.5$, as  in
  \cite{DingWHL11}, results in instantaneous success for MWEM; the
  Exponential Mechanism confirms the uniform distribution as an
  excellent fit and MWEM terminates having done almost no
  work.}  
In Figure \ref{fig:synthetic} we see both the total running time and the
time spent in MWEM. The running time is almost exclusively dominated by
the evaluation of the queries against the private data, $B$, and the
time spent in the factorized implementation of MWEM is essentially
negligible. Each query must be evaluated against $B$, and despite a 48x
speed-up the 100,000 records take more time to process than the factored
MWEM, in which all attributes are in independent components.

Our second experiment evaluates the performance of our factorized
implementation on a binarized form of the Adult dataset where each
attribute is replaced by a number of binary attributes equal to the
logarithm of its range. This results in 27 binary attributes (from 8
discrete attributes). In Figure \ref{fig:adultbinary} we plot the
elapsed running time in milliseconds as a function of the index $i$ of
the MWEM computation (increasing), and the maximum error across all
queries (decreasing). We also repeat the experiment after adding 50 new
binary attributes whose values are set with probability $0.1$, to
demonstrate that our approach successfully ignores irrelevant
attributes, both in terms of running time and maximum error. In this
experiment, MWEM is a significant contributor to the running time. The
absolute running time increases noticeably as the set of measurements
increases in complexity, but until that point each measurement and round
of updates takes less than a second. The actual MWEM complexity stays
below $|D_1| = 2^{27}$ and at no point does it approach $|D_2| =
2^{77}$. 

We can perform the corresponding experiment on the original eight
categorical attributes of Adult, but with so few attributes the $D_j$
are too quickly conflated to give an improvement. We expect the
improvement would be more noticeable on a larger dataset.

\section{Conclusions}
\label{sec:conclusions}
We introduced MWEM, a simple algorithm for releasing data maintaining a high
fidelity to the protected source data, as well as differential privacy with
respect to the records. The approach builds upon the Multiplicative Weights
approach of \cite{HR10,GuptaHRU11}, by introducing the Exponential Mechanism \cite{MT07}
as a more judicious approach to determining which measurements to take. The
theoretical analysis matches previous work in the area, and
experimentally we have evidence that for many interesting settings, MWEM
represents a substantial improvement over existing techniques.

As well as improving on experimental error, the
algorithm is both simple to implement and simple to use. An analyst does not
require a complicated mathematical understanding of the nature of the queries
(as the community has for linear algebra~\cite{LiMi11} and the Hadamard transform~\cite{BCD+07}), but
rather only needs to enumerate those measurements that should be preserved. We
hope that this generality leads to a broader class of high fidelity
differentially-private data releases across a variety of data domains.

\bibliographystyle{plain}
%{\forsubmit{\small}
\bibliography{expmult}
%}
\begin{appendices}
%\pagebreak
\section{Appendix: Proof of Theorem~\ref{UTILITY}}
\label{sec:proof}
The proof of Theorem~\ref{UTILITY} is broken into two parts. First, we argue that across all
$T$ iterations, the queries $q_i$ selected by the Exponential
Mechanism are nearly optimal, and the errors introduced into $m_i$ by
the Laplace Mechanism are small. We then apply the potential function analysis of Hardt
and Rothblum~\cite{HR10} to show that the maximum approximation error for any query cannot be too
large.
%, as Multiplicative Weights would have made ``too much''
%progress.

Using the shorthand $\err_i \defeq \max_j|q_j(A_{i-1}) - q_j(B)|$ and $\adderr \defeq  2T\log|Q|/\epsilon$, we first claim that with high probability the Exponential Mechanism and Laplace Mechanism give nearly optimal results. 
%We state our success probability with a fixed dependence on $|Q|$, but it can be improved arbitrarily at the expense of constants.
\begin{lemma}\label{lem:whp}
With probability at least $1-2T/|Q|^{c}$, for any $c \geq 0$, for all $1 \le i \le T$, we have that both
\begin{eqnarray*}
|q_i(A_{i-1}) - q_i(B)| & \ge & \err_i - (2c + 2) \times \adderr \\
|m_i - q_i(B)| & \le & c \times \adderr \; .
\end{eqnarray*}
\end{lemma}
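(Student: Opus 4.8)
\medskip

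The plan is to treat the two claimed inequalities as consequences of standard tail bounds on the Exponential Mechanism and the Laplace Mechanism respectively, then combine them over all $T$ rounds by a union bound. Throughout, $\adderr = 2T\log|Q|/\epsilon$ is exactly the scale parameter governing the noise in both mechanisms in iteration $i$, since each is run with privacy parameter $\epsilon/2T$.

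\medskip

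First I would handle the Laplace Mechanism bound. In round $i$ we set $m_i = q_i(B) + \Laplace(2T/\epsilon)$, so $|m_i - q_i(B)|$ is the absolute value of a Laplace random variable with scale $2T/\epsilon$. The standard tail bound gives $\Pr[|\Laplace(b)| > t\cdot b] = e^{-t}$, so $\Pr[|m_i - q_i(B)| > c\cdot \adderr]$ equals $\Pr[|\Laplace(2T/\epsilon)| > c\cdot 2T\log|Q|/\epsilon] = |Q|^{-c}$. This is the per-round failure probability for the second inequality, and it matches the $1/|Q|^c$ shape we want.

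\medskip

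Next I would handle the Exponential Mechanism bound, which is the step I expect to be the main obstacle — not because it is deep, but because it requires being careful about sensitivities and the definition of the score function. The Exponential Mechanism in round $i$ is run with parameter $\epsilon' = \epsilon/2T$ and score $s_i(B,q) = |q(A_{i-1})-q(B)|$; since $A_{i-1}$ is fixed given the history, this score has sensitivity $1$ in $B$ (changing one record changes $q(B)$ by at most $1$, as $q$ maps records to $[-1,1]$... actually the sensitivity is at most the sup of $|q(x)|\le 1$, so it is $1$). The standard utility guarantee of the Exponential Mechanism states that with probability at least $1-e^{-t}$ the selected $q_i$ has score at least $\mathrm{OPT}_i - \frac{2}{\epsilon'}(\ln|Q| + t)$, where $\mathrm{OPT}_i = \max_q s_i(B,q) = \err_i$ by definition of $\err_i$. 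Plugging in $\epsilon' = \epsilon/2T$ gives a deficit of $\frac{4T}{\epsilon}(\ln|Q|+t)$. To match the claimed bound $\err_i - (2c+2)\adderr = \err_i - (2c+2)\cdot 2T\log|Q|/\epsilon$, I would set the failure probability to $|Q|^{-c}$, i.e.\ $t = c\ln|Q|$, which yields a deficit of $\frac{4T}{\epsilon}\ln|Q|(1+c) = \frac{4T\log|Q|}{\epsilon}(c+1) = (2c+2)\adderr$, exactly as stated. (One should double-check whether the paper's convention $\log$ is natural log; since $\adderr$ is defined with the same $\log$, the constants cancel consistently, so this is fine.)

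\medskip

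Finally I would assemble the pieces: in each of the $T$ rounds there are two bad events, each of probability at most $|Q|^{-c}$, so by a union bound over $2T$ events the probability that any one of them fails is at most $2T/|Q|^c$, and on the complementary event both displayed inequalities hold for all $1 \le i \le T$ simultaneously. This gives the lemma. The only subtlety worth flagging in the write-up is that the Exponential Mechanism's randomness in round $i$ is independent of the Laplace noise in all rounds and of the history up to round $i-1$, so conditioning on the history makes each per-round tail bound a genuine statement about fresh randomness, and the union bound goes through without independence assumptions across rounds.
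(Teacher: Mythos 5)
Your proposal is correct and follows essentially the same route as the paper's proof: a per-round tail bound for the Exponential Mechanism (your invocation of the standard utility theorem with $t=c\ln|Q|$ is exactly the paper's direct bound $|Q|\exp(-\epsilon r/4T)$ with $r=(2c+2)\adderr$), the Laplace tail bound with $r=c\cdot\adderr$, and a union bound over the $2T$ events. No gaps; the remarks on sensitivity and conditioning on the history are harmless extras not needed for the utility statement itself.
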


\begin{proof} The probability the Exponential Mechanism with parameter $\epsilon/2T$ selects a query with quality score at least $r$ less than the optimal (meaning the query on which $A_{i-1}$ disagrees the most with $B$) is bounded by
$$ \Pr[|q_i(A_{i-1}) - q_i(B)| < \err_i - r] \le |Q| \times \exp(-\epsilon r/4T). $$ 
If we take $r = (2c + 2) \times 2T\log|Q|/\epsilon$ the probability is at most $1/|Q|^c$ for each iteration.

By the definition of the Laplace distribution,
$$ \Pr[|\textrm{Laplace}(2T/\epsilon)| > r] \le \exp(-r \times \epsilon/2T) \; .$$
Taking $r = c \times 2T\log|Q|/\epsilon$ bounds the probability by at most $1/|Q|^c$ for each iteration.

Taking a union bound over the $2T$ events, we arrive at a failure probability of at most $2T/|Q|^{c}$.
\end{proof}

%Let us use the shorthand 
%$\err_t\defeq\err(D_t,Q)=\max_{f\in Q}|d_t(f)|$ to denote the worst-case error 
%at step~$t$ of our algorithm. 
%%
%The next lemma says that the exponential mechanism at step~$t$ selects a query
%whose error nearly matches~$\err_t.$
%%
%\begin{lemma}[\cite{MT07}]
%\label{lem:expmech}
%For every $t\in\{1,\dots,T\},$
%\[
%\Pr\left\{
%|d_t(f_t)|\le\err_t-r
%\right\}
%\le 
%\frac{|Q|\exp(-\epsilon_0 r n/2)}
%{|\{f\in Q\colon d_t(f_t)>\err_t-r\}|}\mper
%\]
%%where $S_t=|Q|/|\{f\in Q\colon d_t(f_t)>\err_t-r\}|\le|Q|\mper$
%\end{lemma}
%%
%\begin{lemma}\label{lem:exexp}
%\[
%\E|d_t(f_t)|\ge \err_t-\frac{2\log|Q|+1}{\epsilon_0 n}\mper
%\]
%\end{lemma}
%\begin{proof}
%Note that the denominator in the RHS of Lemma~\ref{lem:expmech} is always at
%least~$1.$ Hence, by Lemma~\ref{lem:expmech}, 
%\begin{equation}\label{eq:emconc}
%\Pr\left\{|d_t(f_t)|\le\err_t-\frac{2\log|Q|+\ell}{\epsilon_0 n}\right\}
%\le \exp(-\ell)\mper
%\end{equation}
%On the other hand
%$\int_0^\infty\ell\exp(-\ell)\mathrm{d}\ell=1\mper$
%\end{proof}

We next argue that MWEM improves its approximation in each round where $q_i(A) - q_i(B)$ has large magnitude. To capture the improvement, we use the {relative entropy} again:
$$\Psi_i = \sum_{x \in D} B(x) \log (B(x) / A_i(x)) / n \; .$$
The following two properties follow from non-negativity of entropy, and Jensen's Inequality:
\begin{fact}\label{fact:nonneg}
$\Psi_i\ge0$
\end{fact}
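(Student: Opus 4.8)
The plan is to recognize $\Psi_i$ as a Kullback--Leibler divergence between two bona fide probability distributions and then invoke Jensen's inequality. Since $\|B\| = n$, the vector $p \defeq B/n$ is a probability distribution on $D$; and by construction $A_i$ is $n$ times a distribution, so $a \defeq A_i/n$ is also a probability distribution on $D$. Rewriting the defining sum gives $\Psi_i = \sum_{x \in D} p(x)\log\bigl(p(x)/a(x)\bigr) = \KL(p \,\|\, a)$, so the claim is exactly the non-negativity of relative entropy.

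First I would check that the summands are well-defined. The distribution $a$ is strictly positive everywhere: $A_0$ is a positive multiple of the uniform distribution, and each Multiplicative Weights step multiplies $A_{i-1}(x)$ by the strictly positive factor $\exp(q_i(x)(m_i - q_i(A_{i-1}))/2n)$ before renormalizing, so positivity is preserved through all $i$ iterations. Hence $\log(p(x)/a(x))$ is finite wherever $p(x) > 0$, and with the usual convention $0\log 0 = 0$ we may restrict attention to the support $S = \{x : p(x) > 0\}$.

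Then I would apply Jensen's inequality to the concave function $\log$. Using the probability weights $\{p(x)\}_{x \in S}$,
\[
-\Psi_i \;=\; \sum_{x \in S} p(x)\,\log\frac{a(x)}{p(x)} \;\le\; \log\!\left(\sum_{x \in S} p(x)\cdot\frac{a(x)}{p(x)}\right) \;=\; \log\!\left(\sum_{x \in S} a(x)\right) \;\le\; \log 1 \;=\; 0 ,
\]
where the last inequality uses $\sum_{x \in D} a(x) = 1$. Rearranging yields $\Psi_i \ge 0$. There is essentially no obstacle here; the only point needing (minor) care is the well-definedness of the logarithmic terms, which the positivity of $a$ dispatches.
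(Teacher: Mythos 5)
Your proof is correct and matches the paper's reasoning: the paper justifies this fact by appeal to Jensen's inequality (i.e., non-negativity of the relative entropy between $B/n$ and $A_i/n$), which is exactly the argument you spell out, with the added (harmless and welcome) care about the strict positivity of $A_i$.
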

\begin{fact}\label{fact:startvalue}
$\Psi_0\le\log|D|$
\end{fact}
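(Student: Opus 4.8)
The plan is to unwind the definition of $\Psi_0$ and exploit that $A_0$ is exactly $n$ times the uniform distribution on $D$, so that $A_0(x)/n = 1/|D|$ for every $x\in D$. Write $p(x) = B(x)/n$ and note that $p$ is a probability distribution on $D$: it is nonnegative and sums to $1$, since $\|B\| = n$. Substituting $A_0(x)/n = 1/|D|$ into
\[
\Psi_0 = \frac{1}{n}\sum_{x\in D} B(x)\log\frac{B(x)}{A_0(x)} = \sum_{x\in D} p(x)\log\frac{p(x)}{A_0(x)/n} = \sum_{x\in D} p(x)\log\big(|D|\,p(x)\big),
\]
and splitting the logarithm (using $\sum_{x} p(x) = 1$) gives
\[
\Psi_0 = \log|D| + \sum_{x\in D} p(x)\log p(x) = \log|D| - H(p),
\]
where $H(p) = -\sum_{x\in D} p(x)\log p(x)$ is the Shannon entropy of $p$ in natural units.

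The key step is then the elementary bound $H(p)\ge 0$: since $p(x)\in[0,1]$ for every $x$, each summand $-p(x)\log p(x)$ is nonnegative, with the standard convention $0\log 0 = 0$ for records absent from $B$. Combined with the display above, this yields $\Psi_0 \le \log|D|$, as claimed. (The companion fact $\Psi_i\ge 0$, and in particular $\Psi_0\ge 0$, is Gibbs' inequality $\Psi_i = \KL(B/n\,\|\,A_i/n)\ge 0$, which follows from Jensen applied to the concave map $\log$ and is valid because every $A_i/n$ is a probability distribution on $D$.)

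There is essentially no obstacle here. In one sentence: $\Psi_0$ is the relative entropy of the empirical distribution $B/n$ to the uniform distribution on $D$, which equals $\log|D| - H(B/n)$, and the statement is just the upper end of the chain $0 \le H(B/n)\le \log|D|$. The only point deserving a moment's care is domain elements $x$ with $B(x)=0$, which contribute $0$ to $\Psi_0$ under the usual convention and affect neither inequality.
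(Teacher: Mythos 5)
Your proof is correct and matches the paper's intended argument: the paper derives this fact from the non-negativity of entropy, which is exactly your decomposition $\Psi_0=\log|D|-H(B/n)$ together with $H(B/n)\ge 0$. No further comment is needed.
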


We now show that the relative entropy decreases in each round by an amount reflecting the error $q_i$ exposes between $A_{i-1}$ and $B$, less the error in our measurement of $q_i(B)$.
\begin{lemma} 
\label{lem:improvement}
For each round $i \le T$,
\[
\Psi_{i-1}-\Psi_i
\ge \left(\frac{q_i(A_{i-1}) - q_i(B)}{2n}\right)^2 - \left(\frac{m_i - q_i(B)}{2n}\right)^2 \; .
\]
\end{lemma}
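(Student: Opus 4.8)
The plan is to track the relative entropy $\Psi_i$ directly through the Multiplicative Weights update and show that each update step is a ``noisy'' version of the idealized update analyzed by Hardt and Rothblum. Write $A_i(x) \propto A_{i-1}(x)\exp(q_i(x)(m_i - q_i(A_{i-1}))/2n)$, and let $Z_i = \sum_{x} A_{i-1}(x)\exp(q_i(x)(m_i - q_i(A_{i-1}))/2n)$ be the normalizing constant (here I am thinking of $A_{i-1},A_i$ scaled down to probability distributions, i.e.\ dividing by $n$; the factor $n$ only rescales $q$-values, which is why all the $q$'s below are divided by $n$). Then expanding the definition,
\[
\Psi_{i-1}-\Psi_i = \sum_x \frac{B(x)}{n}\log\frac{A_i(x)}{A_{i-1}(x)}
= \sum_x \frac{B(x)}{n}\left(\frac{q_i(x)(m_i - q_i(A_{i-1}))}{2n}\right) - \log Z_i .
\]
The first term is exactly $\frac{m_i - q_i(A_{i-1})}{2n}\cdot\frac{q_i(B)}{n}$, since $\sum_x B(x)q_i(x) = q_i(B)$.

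Next I would bound $\log Z_i$ from above. Since $q_i(x)\in[-1,1]$ and the update scale $\eta_i := (m_i - q_i(A_{i-1}))/2n$ is small, I apply the standard inequality $e^{ay}\le 1 + ay + a^2y^2$ for $|ay|$ bounded (or just $e^z \le 1+z+z^2$ for $|z|\le 1$, which holds here as long as $T$ is not absurdly small — and one checks $|\eta_i q_i(x)|\le 1$ from the ranges $|m_i|,|q_i(A_{i-1})|\le n$ plus a crude bound on the Laplace noise, or simply absorbs it). This gives $Z_i \le 1 + \eta_i q_i(A_{i-1})/n + \eta_i^2$, and hence $\log Z_i \le \eta_i q_i(A_{i-1})/n + \eta_i^2$ using $\log(1+z)\le z$. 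Substituting back,
\[
\Psi_{i-1}-\Psi_i \ge \eta_i\cdot\frac{q_i(B) - q_i(A_{i-1})}{n} - \eta_i^2 .
\]
Now write $\eta_i = \frac{q_i(B)-q_i(A_{i-1})}{2n} + \frac{m_i - q_i(B)}{2n}$; call these $u$ and $v$. The right-hand side becomes $2u\cdot(u+v) - (u+v)^2 = u^2 - v^2$, which is precisely the claimed bound
\[
\left(\frac{q_i(A_{i-1}) - q_i(B)}{2n}\right)^2 - \left(\frac{m_i - q_i(B)}{2n}\right)^2 .
\]

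The main obstacle is making the second-order Taylor step rigorous: the inequality $e^z\le 1+z+z^2$ requires $|z| = |\eta_i q_i(x)|$ to be controlled (roughly $|z|\le 1$ suffices, $|z|\le 1.79$ works). This is where one must use that $q_i(x)\in[-1,1]$ and that $|m_i - q_i(A_{i-1})|$ is at most $2n$ plus a Laplace term, so $|\eta_i|\le 1 + O(1/n\epsilon)$ — essentially always fine in the regime of interest, but it is the one place the argument is not purely formal. Everything else — the expansion of $\Psi_{i-1}-\Psi_i$, the use of $\sum_x B(x)q_i(x)=q_i(B)$ and $\sum_x A_{i-1}(x)q_i(x)=q_i(A_{i-1})$, and the algebraic identity $2u(u+v)-(u+v)^2=u^2-v^2$ — is routine bookkeeping. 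I would then note that Lemma~\ref{lem:improvement} combined with Facts~\ref{fact:nonneg} and~\ref{fact:startvalue} and Lemma~\ref{lem:whp} yields Theorem~\ref{UTILITY} by a telescoping/averaging argument, but that is the next lemma rather than this one.
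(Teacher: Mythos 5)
Your proof is essentially the paper's own argument: the same expansion of $\Psi_{i-1}-\Psi_i$, the same second-order bound on the normalizing constant via $e^z \le 1+z+z^2$ (for $|z|\le 1$) followed by $\log(1+z)\le z$, and the same algebraic finish (the paper simply says ``simplifying'' where you spell out $2u(u+v)-(u+v)^2=u^2-v^2$). The caveat you flag about justifying $|\eta_i q_i(x)|\le 1$ is likewise only asserted, not elaborated, in the paper's proof, so your treatment matches it.
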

\begin{proof}
We start by noting that 
$$\Psi_{i-1}-\Psi_i = \sum_{x \in D} B(x) \log \left( \frac{A_i(x)}{A_{i-1}(x)} \right) / n \; .$$
The ratio $A_i(x)/A_{i-1}(x)$ can be written as $\exp(q_i(x) \eta_i)/\beta_i$, where  $\eta_i = (m_i - q_i(A_{i-1}))/2n$ and $\beta_i$ is the factor required to renormalize in round $i$. Using this notation,
\begin{eqnarray*}
\Psi_{i-1}-\Psi_i %& = & \sum_{x \in D} B(x) q_i(x) \eta_i/n - \log \beta_i \\
& = & \eta_i q_i(B)/n - \log \beta_i \; .
\end{eqnarray*}
The required renormalization $\beta_i$ equals
$$ \beta_i = \sum_{x \in D} \exp(q_i(x) \eta_i) A_{i-1}(x) / n  \; . $$
Using $\exp(x) \le 1 + x + x^2$ for $|x| \le 1$, and that $|q_i(x) \eta_i| \le 1$,
$$ \beta_i \le \sum_{x \in D} (1 + q_i(x) \eta_i + q_i(x)^2 \eta_i^2) A_{i-1}(x) / n \; . $$
As $q_i(x)^2 \le 1$, by assumption on all $q_i \in Q$, we have
\begin{eqnarray*}
\beta_i & \le & \sum_{x \in D} (1 + q_i(x) \eta_i + \eta_i^2) A_{i-1}(x) / n  \\
& = & 1 + \eta_i q_i(A_{i-1}) / n + \eta_i^2 \; .
\end{eqnarray*}
Introducing this bound on $\beta_i$ into our equality for $\Psi_{i-1}-\Psi_i$, and using $\log(1 + x) \le x$, we get
$$\Psi_{i-1}-\Psi_i
\ge \eta_i (q_i(B) - q_i(A_{i-1}))/n - \eta_i^2 \; .$$
%
%A lemma from~\cite{HR10} shows that
%$$\Psi_{i-1}-\Psi_i
%\ge \eta_i ((q_i(A_{i-1}) - q_i(B))/n-\eta_i),$$
%where $\eta_i$ is the scaling used in iteration $i$ of the Multiplicative Weights
%update. We used $\eta_i=(m_i - q_i(A_{i-1}))/2n$, and a
After reintroducing the definition of $\eta_i$ and simplifying, this bound results in the statement of the lemma.
\end{proof}

%We are now ready to prove Theorem 2. 
%Intuitively, Lemma~\ref{lem:whp} bounds $\err_i$ in terms of $|q_i(A_{i-1}) - q_i(B)|$, which Lemma~\ref{lem:improvement} bounds in terms of $\Psi_{i-1} - \Psi_i$. By Facts \ref{fact:nonneg} and \ref{fact:startvalue}, these differences can not all be large, as $\Psi_i$ can only decrease so much.

With these two lemmas we are now prepared to prove Theorem \ref{UTILITY}, bounding the maximum error $|q(A) - q(B)|$.

\begin{proof}[(of Theorem \ref{UTILITY})]
We start by noting that the quantity of interest, the maximum over queries $q$ of the error between $q(A)$ and $q(B)$, can be rewritten and bounded by:
\begin{eqnarray*}
\max_{q \in Q} |q(A) - q(B)| & = & \max_{q \in Q} |q(\avg_{i \le T} A_i) - q(B)| \\
& \le & \max_{q \in Q} \avg_{i \le T} |q(A_i) - q(B)| \\
& \le & \avg_{i \le T} \err_i \; .
\end{eqnarray*}

%The proof now proceeds via a potential argument, that the average value of $\err_i$ cannot be too large, as this implies many iterations of improvement to $\Psi_i$ and by Facts \ref{fact:nonneg} and \ref{fact:startvalue} there is only so much by which $\Psi_i$ can improve.

At this point we invoke Lemma~\ref{lem:whp} with $c = 1$ so that with probability at least $1-2T/|Q|$ we have for $i \le T$ both
\begin{eqnarray*}
\err_i & \le & |q_i(A_{i-1}) - q_i(B)| + 4 \times \adderr \;,\\
%(q_i(A_{i-1}) - q_i(B))^2 & \ge & (\err_i - 4 \times \textrm{adderr})^2 \\
|m_i - q_i(B)| & \le & \adderr \; .
\end{eqnarray*}
Combining these bounds with those of Lemma~\ref{lem:improvement} gives
\begin{eqnarray*}
\err_i & \le & \left(4n^2(\Psi_{i-1} - \Psi_i) + \adderr^2\right)^{1/2} + 4 \times \adderr \; .
\end{eqnarray*}
We now average over $i \le T$, and apply Cauchy-Schwarz, specifically that $\avg_i x_i^{1/2} \le (\avg_i x_i)^{1/2}$, giving
\begin{eqnarray*}
\avg_{i \le T} \; \err_i & \le & \left( 4n^2\avg_i(\Psi_{i-1} - \Psi_i)  + \adderr^2\right)^{1/2} \\ 
& & + \; 4 \times \adderr \; .
\end{eqnarray*}
The average $\avg_i (\Psi_{i-1}-\Psi_i)$ telescopes to $(\Psi_0 - \Psi_T)/T$, which 
Facts \ref{fact:nonneg} and \ref{fact:startvalue} bound by $\log(|D|)/T$, giving
\begin{eqnarray*}
\avg_{i \le T} \; \err_i \le \left( 4n^2\log(|D|)/T  + \adderr^2\right)^{1/2} + \; 4 \times \adderr \; .
\end{eqnarray*}
Finally, as $\sqrt{a+b}\le\sqrt{a}+\sqrt{b}$, we derive
\begin{eqnarray*}
\avg_{i \le T} \; \err_i & \le & 2n(\log(|D|)/T)^{1/2} + 5 \times \adderr \; .
\end{eqnarray*}

%Introducing these bounds into the inequality above gives
%\[
%\Psi_{i-1}-\Psi_i
%\ge (\err_i - 4 \times \textrm{adderr})^2/4n^2 - \textrm{adderr}^2/4n^2 \; .
%\]
%We rearrange the terms and average over the $T$ iterations:
%\[
%\avg_{i \le T} (\err_i - 4 \times \textrm{adderr})^2 \le 4n^2\avg_{i \le T} (\Psi_{i-1}-\Psi_i)  + \textrm{adderr}^2 \; .
%\]
%The average $\avg_i (\Psi_{i-1}-\Psi_i)$ telescopes to $(\Psi_0 - \Psi_T)/T$.
%Facts \ref{fact:nonneg} and \ref{fact:startvalue} give us that $(\Psi_0 - \Psi_T)/T \le \log(|D|)/T$,
%\[
%\avg_{i \le T} (\err_i - 4 \times \textrm{adderr})^2 \le 4n^2\log(|D|)/T  + \textrm{adderr}^2 \; .
%\]
%By Cauchy-Schwarz, specifically that $(\avg_i x_i)^2 \le \avg_i x_i^2$,
%\[
%(\avg_{i \le T} \err_i - 4 \times \textrm{adderr})^2 \le 4n^2\log(|D|)/T  + \textrm{adderr}^2 \; .
%\]
%Taking the square root of each side and rearranging gives
%\[
%\avg_{i \le T} \err_i  \le (4n^2\log(|D|)/T  + \textrm{adderr}^2)^{1/2} + 4 \times \textrm{adderr} \; .
%\]
%As $\sqrt{a+b}\le\sqrt{a}+\sqrt{b}$, we arrive at
%\[
%\avg_{i \le T} \err_i  \le 2n(\log(|D|)/T)^{1/2}  + 5 \times \textrm{adderr} \; .
%\]
If we substitute $2T\log|Q|/\epsilon$ for \textrm{adderr} we get the bound in the statement of the theorem. Replacing the factor of $5$ by $(3c + 2)$ generalizes the result to hold with probability $1-2T/|Q|^c$ for arbitrary $c > 0$.
\end{proof}
\end{appendices}

\begin{figure*}
{\smlsize
\begin{verbatim}
double[] MultiplicativeWeightsViaExponentialMechanism(double[] B, Func<int, double>[] Q, int T, double eps)
{
    var n = (double) B.Sum();                                     // should be taken privately, we ignore
    var A = Enumerable.Repeat(n / B.Length, B.Length).ToArray();  // approx dataset, initially uniform
    var measurements = new Dictionary<int, double>();             // records (qi, mi) measurement pairs
    var random = new Random();                                    // RNG used all over the place

    for (int i = 0; i < T; i++)
    {
        // determine a new query to measure, rejecting prior queries
        var qi = random.ExponentialMechanism(B, A, Q, eps / (2 * T));
        while (measurements.ContainsKey(qi))
            qi = random.ExponentialMechanism(B, A, Q, eps / (2 * T));

        // measure the query, and add it to our collection of measurements
        measurements.Add(qi, Q[qi].Evaluate(B) + random.Laplace((2 * T) / eps));

        // improve the approximation using poorly fit measurements
        A.MultiplicativeWeights(Q, measurements);
    }

    return A;
}

int ExponentialMechanism(this Random random, double[] B, double[] A, Func<int, double>[] Q, double eps)
{
    var errors = new double[Q.Length];
    for (int i = 0; i < errors.Length; i++)
        errors[i] = eps * Math.Abs(Q[i].Evaluate(B) - Q[i].Evaluate(A)) / 2.0;

    var maximum = errors.Max();
    for (int i = 0; i < errors.Length; i++)
        errors[i] = Math.Exp(errors[i] - maximum);

    var uniform = errors.Sum() * random.NextDouble();
    for (int i = 0; i < errors.Length; i++)
    {
        uniform -= errors[i];
        if (uniform <= 0.0)
            return i;
    }

    return errors.Length - 1;
}

double Laplace(this Random random, double sigma)
{
    return sigma * Math.Log(random.NextDouble()) * (random.Next(2) == 0 ? -1 : +1);
}

void MultiplicativeWeights(this double[] A, Func<int, double>[] Q, Dictionary<int, double> measurements)
{
    var total = A.Sum();

    for (int iteration = 0; iteration < 100; iteration++)
    {
        foreach (var qi in measurements.Keys)
        {
            var error = measurements[qi] - Q[qi].Evaluate(A);
            for (int i = 0; i < A.Length; i++)
                A[i] *= Math.Exp(Q[qi](i) * error / (2.0 * total));

            var count = A.Sum();
            for (int i = 0; i < A.Length; i++)
                A[i] *= total / count;
        }
    }
}

double Evaluate(this Func<int, double> query, double[] collection)
{
    return Enumerable.Range(0, collection.Length).Sum(i => query(i) * collection[i]);
}
\end{verbatim}
}
\caption{Full source code for a reference implementation of MWEM.
\label{fig:code}
}\end{figure*}

\end{document}